\newcommand{\algorithmicoutput}{\textbf{Output:}}
\newcommand{\OUTPUT}{\item[\algorithmicoutput]}
\newtheorem{theorem}{\bf Theorem}
\newtheorem{definition}{\bf Definition}
\newtheorem{proposition}{\bf Proposition}
\newtheorem{assumption}{\bf Assumption}
\declaretheorem[style=definition,qed=$\square$]{remark}
\declaretheorem[style=definition,qed=$\square$]{example}
\DeclareMathOperator*{\argmin}{arg\,min}
\newcommand*{\LONGVERSION}{}
\newcommand{\upd}{}
\begin{document}


\author{
    \IEEEauthorblockN{Takashi Tanaka\IEEEauthorrefmark{1}, Farhad Farokhi\IEEEauthorrefmark{2}, C\'edric Langbort\IEEEauthorrefmark{3}}\\
    \IEEEauthorblockA{\IEEEauthorrefmark{1}LIDS, Massachusetts Institute of Technology, USA
    \texttt{ttanaka@mit.edu}} \\
    \IEEEauthorblockA{\IEEEauthorrefmark{2}Dept. of Electrical and Electronic Eng., University of Melbourne, Australia
    \texttt{ffarokhi@unimelb.edu.au}} \\
    \IEEEauthorblockA{\IEEEauthorrefmark{3}Coordinated Science Lab., University of Illinois at Urbana-Champaign, USA
    \texttt{langbort@illinois.edu}}
}

\title{
Faithful Implementations of Distributed Algorithms and Control Laws\thanks{An early version of this article was presented at the 52nd IEEE Conference on Decision and Control~\cite{TanakaCDC2013}.} }
\maketitle

\begin{abstract} 
When a distributed algorithm must be executed by strategic agents with misaligned interests, a social leader needs to introduce an appropriate tax/subsidy mechanism to incentivize agents to faithfully implement the intended algorithm so that a correct outcome is obtained. 
We discuss the incentive issues of implementing economically efficient distributed algorithms using the framework of indirect mechanism design theory. 
In particular, we show that indirect Groves mechanisms are not only sufficient but also necessary to achieve incentive compatibility. This result can be viewed as a generalization of the Green-Laffont theorem to indirect mechanisms. 
Then we introduce the notion of asymptotic incentive compatibility as an appropriate solution concept to faithfully implement distributed and iterative optimization algorithms.
We consider two special types of optimization algorithms: dual decomposition algorithms for resource allocation and average consensus algorithms. 
\end{abstract}

\section{Introduction}
\label{secintro}

In this paper, we consider a society comprised of a single leader and $N$ followers. The leader  makes a social decision $z\in \mathcal{Z}$,\footnote{For ease of presentation, we assume $\mathcal{Z} \subset \mathbb{R}^{n_z}$.} which incurs cost $v_i(z;\theta_i)$ to the $i$-th follower. 
For every $i=1,2,\cdots,N$, assume that cost function $v_i(z;\theta_i)$ has a known parametric model  while parameters $\theta_i\in\Theta_i$ are private. (For instance, $v_i(z;\theta_i)$ can be a $10$-th order polynomial of a scalar variable $z$ whose coefficients $\theta_i\in\mathbb{R}^{10}$ are private.)
The leader desires to make a social decision $z^*$ that minimizes the sum of the followers' individual costs; 
\begin{equation}
\label{zstar}
z^*\in \argmin_{z\in \mathcal{Z}} \sum_{i=1}^N v_i(z;\theta_i).
\end{equation}
A social decision satisfying (\ref{zstar}) is said to be \emph{(economically) efficient}.
{\upd Efficient decision-making requires distributed algorithms involving leader-follower communication, since the leader has no access to the private parameters.}
A challenge here is that such decision mechanisms must be designed so that self-interested followers are  given no incentive to manipulate the algorithm in an effort to minimize their individual costs. 
Note that this requirement is different from the fault resilience requirement considered in, for instance, the Byzantine generals problem \cite{lamport1982}. We are interested in decision mechanisms in which  the followers could manipulate the result, but they choose not to do so.

Roughly speaking, the task of the leader is to design a game that produces an efficient decision
$z^*$
as a consequence of game-theoretic equilibrium strategies of  the followers.
Designing such games systematically in various multi-agent decision making situations (e.g., auctions, elections, resource allocations), frequently using a carefully designed tax/subsidy rule,  is a subject of interest in the Economics literature under the umbrella of \emph{mechanism design} theory. 
{\upd Developments since the 1970s in mechanism design theory have resulted in a  rich and established discipline; basic information about mechanism design theory can be found in, e.g.,~\cite{RefWorks:260,myerson1997,RefWorks:83,RefWorks:298,RefWorks:100,krishna2009}.}
One of the best-known positive results is the Groves mechanism, which provides clear guidelines to design tax/subsidy rules incentivizing  the followers to be collaborative in the process of computing efficient decisions. 

Recently, the theory of mechanism design has been applied to various engineering and computer scientific problems. These applications have raised new challenges to the traditional mechanism design theory, in term of computational difficulties (e.g., combinatorial auctions \cite{RefWorks:108}\cite{RefWorks:299}, job scheduling \cite{archer2001}) and communication difficulties (e.g., inter-domain routing \cite{feigenbaum2005}). 
{\upd
For instance,  the standard Groves mechanism becomes computationally intractable if the optimization problem (\ref{zstar}) is NP-hard.
In such cases, the goal of mechanism design must be set alternatively to incentivize strategic followers to act faithfully in a (computationally feasible) algorithm that only approximates an optimal solution.
It turns out that this task is not straightforward, since mechanisms that naively approximate the Groves mechanism are in general \emph{not} ``approximately incentive compatible" \emph{at all} \cite{RefWorks:235}\cite{RefWorks:236}.
This implies that a fundamental departure from the Groves mechanism is inevitable when the underlying optimization problem is computationally hard.} The interplay between incentives, computation, and communication now forms the field of \emph{algorithmic mechanism design (AMD)}  \cite{RefWorks:236}\cite{RefWorks:288}. 
{\upd It should be noted that several recent papers discuss similar ideas regarding game designs for distributed optimization/control without referring to the AMD theory explicitly \cite{marden2008,RefWorks:237,li2013}.
However, their intrinsic connections to the mechanism design theory should be clarified to facilitate the further developments beyond their current problem-specific nature.
}

The simplest approach that the leader can take to obtain a solution  in  (\ref{zstar}) is to incentivize followers to report their private parameters $\theta_i$ truthfully, so that the leader can solve the optimization problem using the central computer.
This particular type of decision making procedure, called \emph{direct revelation mechanism}, has been the main focus in the mechanism design literature. 
{\upd In many cases, this concentrated interest can be justified by the \emph{revelation principle} \cite{myerson1997}, which proves the existence of an incentive compatible direct mechanism for every incentive compatible indirect mechanism, and thus guarantees no loss of generality with focusing only on direct mechanisms. However, as recognized in the AMD literature, there are many situations in which the revelation principle should not be naively relied on:}
\begin{itemize}[leftmargin=*]
\item[i)] In a direct revelation mechanism, the leader is solely responsible for solving a possibly large-scale optimization problem. {\upd Even if there exist more practical distributed algorithms, direct mechanisms do not allow distributed implementations of algorithms by the followers;}
\item[ii)] Reporting $\theta_i$ is not a trivial action: It may be difficult for the followers to identify their own cost function; 
\item[iii)] Reporting $\theta_i$ means a complete loss of privacy;
\item[iv)] Direct and secure communication links between the leader and the followers are not always available\footnote{This issue is raised in \cite{monderer1999}. See also Remark \ref{remchangerule}.};
\item[v)] In homogeneous environments (e.g., internet), it might be difficult to establish a leader\footnote{Since we will always assume that a leader exists in this paper, the item v) is beyond our scope.  However, we note that there are a few successful mechanism design examples under such environments, including  multi-cast cost sharing  \cite{feigenbaum2001} and interdomain routing \cite{feigenbaum2005}. Some important results in distributed algorithmic mechanism design (DAMD) as of 2002 are summarized in a review paper \cite{feigenbaum2002}.}.
\end{itemize}

{\upd To resolve these issues, it is invaluable to develop a general guideline to design distributed algorithms that induce \emph{truthful actions} by the followers. This requirement is far more general than the one in the direct mechanism regime, where only \emph{truthful reports} are considered.} 
The possibility of such generalization is foreseen by several encouraging results. In  \cite{RefWorks:233}, it is shown that a natural generalization of the Groves mechanism to the indirect mechanisms (referred to as indirect Groves mechanisms in this paper) implements a socially optimal set of strategies in \emph{ex-post} Nash equilibria. The idea is employed in \cite{RefWorks:300}, where several concrete distributed algorithms are shown to be faithfully implementable by strategic agents. Coordination of strategic agents in a dynamic decision making process is considered in~\cite{cavallo2012}.

The approach of \cite{RefWorks:233} is particularly attractive since, unlike many results in AMD that are problem-specific, the result there is applicable to a wide range of distributed computation and communication protocols. 
In this paper, we pursue the same direction of research and make several additional observations that are essential especially when deploying these results in distributed numerical optimization and control problems.
We present the following technical contributions in this paper.

\subsubsection{Necessity of indirect Groves mechanisms} We prove in Theorem \ref{theoconverse} that incentive compatible indirect mechanisms must be in the class of indirect Groves mechanisms whenever the space $\Theta_i$ for some $i$ is rich enough to parametrize all quadratic cost functions. This is an extension of the Green-Laffont Theorem \cite{green1977} to the indirect  mechanism setting.
\subsubsection{Asymptotic incentive compatibility}
We introduce this solution concept to justify the use of approximated Groves taxes to incentivize followers to ``act right" in a wide class of optimization algorithms, including continuous optimization algorithms.
Due to the nature of the continuous optimization, the exact solution cannot be obtained in finite time and hence, Groves taxes must be inevitably approximated as well. 
However, it has not been fully discussed in the literature  whether the use of approximated Groves mechanisms in this context is justifiable or not.
We argue that the use of approximated Groves mechanisms is justifiable whenever we have an iterative distributed algorithm, which can be iterated as many times as we wish, and we can compute approximated Groves taxes from its output that diminish the followers' incentives for cheating to an arbitrary small $\epsilon$. We believe such a situation is satisfactory to convince followers to ``act right" in the algorithm, and hence is a practical solution concept. We name this solution concept ``asymptotic incentive compatibility."

{\upd As mentioned earlier, the issue of approximating Groves taxes is well studied in the AMD literature. However, our focus in item 2) above is different. 
 In the AMD literature, the research focus is almost exclusively on discrete optimization with approximation threshold strictly greater than zero (in the language of \cite{papadimitriou1995}).
In such cases, the research focus must be on non-Groves mechanisms, since Groves mechanisms are computationally impractical. 
On the other hand, our focus is still on the (indirect) Groves mechanisms.
We consider their applications to continuous optimization problems and clarify in what sense a mechanism can be ``incentive compatible" in those cases. }


This paper is organized as follows. We start with a motivating example in Section \ref{secresourceallocation}.  Section \ref{secimd} formally introduces the framework of indirect mechanism design. Section \ref{secaic} develops the notion of asymptotic incentive compatibility. In Sections \ref{secdd} and \ref{secaverage}, we discuss faithful implementations of dual decomposition algorithms and average consensus algorithms. Section \ref{secdiscussion} contains some additional discussion and conclusions.

\section{Motivating Examples}
\label{secresourceallocation}
Consider a resource allocation problem of the form
\begin{subequations}
 \label{originalprob}
\begin{align}
\min &\;\; \sum_{i=1}^N v_i(z_i;\theta_i), \label{originalprob1}\\
\text{s.t.} &\;\; Rz=c.  \label{originalprob2}
\end{align}
\end{subequations}
A vector $z=(z_1, \cdots, z_N) \in \mathbb{R}^{n_{z_1}}\times \cdots \times \mathbb{R}^{n_{z_N}}$ is a concatenation of the social decision variables, 
and $R=[R_1 \; \cdots \; R_N]$.
{\upd Define $\mathcal{Z}$ to be the set of all $z$ such that $Rz=c$.}

Let $L(z,\lambda)= \sum_{i=1}^N v_i(z_i;\theta_i)+\lambda^{\top}(Rz-c)$ be the Lagrangian with a Lagrange multiplier $\lambda$. 
The primal-dual optimal solution $(z^*,\lambda^*)$ is a saddle point of $L(z,\lambda)$, and assuming that cost functions are strictly convex, the saddle point value $L^*$ is equal to the optimal value of (\ref{originalprob}); see~\cite{boyd2004convex}. The following iterations are guaranteed to converge to $(z^*,\lambda^*)$ if the step size $\gamma$ is chosen to be sufficiently small~\cite{boyd2003subgradient}:
\begin{subequations}
\label{ahu}
\begin{align}
&\hat{z}_i^k=\argmin_{\hat{z}_i} \left( v_i(\hat{z}_i,\theta_i)+(\lambda^{k-1})^{\top}R_i\hat{z}_i \right) \label{ahu1} \\
&\lambda^k=\lambda^{k-1}+ \gamma(R\hat{z}^k-c). \label{ahu2}
\end{align}
\end{subequations}

Notice that the above algorithm has an attractive form for a distributed implementation since (\ref{ahu1}) can be executed by the followers. This type of parallelization is known as \emph{dual decomposition}.
By increasing the number of iterations, the optimal social decision can be approximated with an arbitrary accuracy, provided that the followers faithfully implement (\ref{ahu1}). 

What kind of side payment (tax/subsidy) mechanism do we need to incentivize the followers to execute (\ref{ahu1})? Let us make a first attempt. 
Suppose that the leader introduces the following format of auction mechanism:
\begin{itemize}[leftmargin=*]
\item[] {\bf Step 1}: Each follower has some initial value $z_i^0$, and the leader has some initial value $\lambda^0$.
\item[] {\bf Step 2}: Run iteration (\ref{ahu}) until it reaches a convergence to $(z_1^*,\cdots, z_N^*, \lambda^*)$.
\item[] {\bf Step 3}: The leader determines the allocation according to $(z_1^*,\cdots, z_N^*)$, and each player makes a payment $p_i^0=\lambda^* z_i^*$ to the leader.
\end{itemize}

In microeconomic theory, $\lambda^*$ is known as the ``market-clearing price" under which demand $\sum_{i=1}^N R_iz_i$ and supply $c$ are balanced.
The above mechanism employs a particular type of tax rule $p_i^0$ which is very natural: the tax imposed on the $i$-th player is calculated by the share he has won times the market-clearing price. 

Unfortunately, the above auction mechanism is \emph{not} incentive compatible. It is easy to demonstrate that it is vulnerable to strategic manipulations. Suppose $N=2$, $c=1$, $R_i=1$, and $v_i(z_i;\theta_i)=(z_i-1)^2$ for $i=1,2$. If both players follow the suggested algorithm, the iteration reaches the optimal solution $(z_1^*, z_2^*, \lambda^*)=(1/2, 1/2, 1)$, which brings a net cost of $v_i(z_i^*;\theta_i)+\lambda^*z_i^*=3/4$ to each player. 
Now, suppose that player $1$ bids $\hat{z}_1^k=1/3$ at every iteration (instead of executing (\ref{ahu1}) faithfully). Then it can be  shown that the iteration arrives at a different fixed point $(z_1', z_2', \lambda')=(1/3, 2/3, 2/3)$. This result brings net cost of $v_1(z_1';\theta_1)+\lambda' z_1'=2/3$ to player $1$, which is less than $3/4$. Hence, player 1 is indeed better off by deviating from (\ref{ahu1}).

We also note that, when followers are ``price-takers" (which is the case, for instance, when every follower has sufficiently small market power and the price cannot be affected by his sole action), it makes sense to assume that each follower executes (\ref{ahu1}) in an effort to minimize his own net cost. 
However, many realistic markets are oligopolistic, in which a stakeholder agent knows that his sole action has a certain effect on the market-clearing price \cite{RefWorks:238}. In this case, he might be better off by ``exercising market power" rather than following (\ref{ahu1}) as shown in the above example.
Analyzing strategic bidding in a given auction mechanism (as in \cite{hao1999,kian2005,johari2004})  is an important topic. In this paper, however, we are more interested in designing mechanisms in which strategic manipulation by a follower brings no benefit to him.

The primal-dual algorithm considered in this section is just a motivating example. The result of the next section is applicable to a much more general class of distributed algorithms.

\section{Indirect Mechanism Design}
\label{secimd}
\subsection{Framework}

\begin{figure}[t]
\centering
 \begin{tikzpicture}[descr/.style={fill=white,inner sep=2.5pt}]
 \matrix(m) [matrix of math nodes, row sep=3em,
 column sep=3em]
 { \Theta & & \mathcal{Z} \times \mathcal{P} \\ & \mathcal{S} & \\ };
 \path[->]
 (m-1-1) edge node[auto] {$ f=(\zeta,\pi) $} (m-1-3)
         edge node[below left=2pt] {$ \sigma $} (m-2-2)
 (m-2-2) edge node[below right=2pt] {$ g=(g_\zeta,g_\pi) $} (m-1-3);
 \end{tikzpicture}

\captionsetup{singlelinecheck=off}
\caption[]{\label{commdiag} Framework of indirect mechanisms.
{\upd
  \begin{itemize}
\item[] $\Theta=\Theta_1\times\cdots\times \Theta_N$: The space of private parameters.
\item[] $\mathcal{P}\subseteq \mathbb{R}^N$: The space of tax values $p=(p_1,\cdots,p_N)$.
\item[] $\mathcal{Z}\subseteq \mathbb{R}^{n_z}$: The space of social decisions $z$.
\item[] $\mathcal{S}=\mathcal{S}_1\times\cdots\times \mathcal{S}_N$: The space of actions.
\item[] $\zeta:\Theta\rightarrow \mathbb{R}^{n_z}$: Decision rule.
\item[] $\pi:\Theta\rightarrow \mathcal{P}$: Tax rule.
\item[] $\sigma_i:\Theta_i\rightarrow \mathcal{S}_i$: Strategy function.
\item[] $f:\Theta\rightarrow \mathcal{Z}\times \mathcal{P}$: Social choice function.
\item[] $g:\mathcal{S}\rightarrow \mathcal{Z}\times \mathcal{P}$: Outcome function.
  \end{itemize}
}
}
\vspace{-3ex}
\end{figure}

The diagram in Fig. \ref{commdiag} shows an abstract framework for indirect mechanisms.
Let $\Theta=\Theta_1\times \cdots \times \Theta_N$ be the space of private parameters. A function $\zeta: \Theta\rightarrow \mathbb{R}^{n_z}$ is called a \emph{decision rule}. For fixed sets $\Theta$ and the space $\mathcal{Z}\subseteq  \mathbb{R}^{n_z}$ of social decisions, a decision rule is said to be \emph{efficient}  if $\zeta(\Theta)\subseteq \mathcal{Z}$ and
$
\sum_{i=1}^N v_i(\zeta(\theta);\theta_i) \leq \sum_{i=1}^N v_i(z;\theta_i)
$
for all $\theta\in \Theta$ and for all $z\in \mathcal{Z}$.
Let us also introduce a \emph{tax rule} $\pi:\Theta\rightarrow \mathcal{P}$ where $\mathcal{P} \subset \mathbb{R}^N$ is the space of tax values assigned to the followers. The pair $f=\left(\zeta,\pi\right)$, $f:\Theta\mapsto \mathbb{R}^{n_z} \times \mathbb{R}^N$ is called a \emph{social choice function}.

When the leader designs a decision making mechanism, the \emph{action space} $\mathcal{S}=\mathcal{S}_1\times \cdots \times \mathcal{S}_N$ must be specified, where $\mathcal{S}_i$ can be thought of as the space of all possible programming codes that the $i$-th follower can potentially execute in the distributed computation (for instance, executing (\ref{ahu1}) is a valid action in the algorithm considered in Section \ref{secresourceallocation}, while bidding $\hat{z}_1^k=1/3$ at every step is another).
The $i$-th follower with private parameter $\theta_i$ determines his actions according to the \emph{strategy function} $\sigma_i:\Theta_i \rightarrow \mathcal{S}_i$.
Outputs of the followers' algorithms are processed by the leader's algorithm called the \emph{outcome function} $g:\mathcal{S}\rightarrow \mathcal{Z} \times \mathcal{P}$, which determines a social decision and tax values.
\begin{remark}
For example, the above formulation can express  the following abstract model of multi-stage interactions between the leader and the followers.
At each stage (indexed by $k=1,2,\cdots, n$), the leader broadcasts his current computational output $\eta_L^k$ to the followers. Each follower transmits his current computational output $\eta_i^k$ to the leader and other followers. Assume that the leader and followers can be modeled as a state-based computer with the internal state $\xi_L^k$ and $\xi_i^k, i=1,\cdots, N$ respectively. Given initial states $\xi_L^0,\xi_i^0$ and $\eta_L^0, \eta_i^0, i=1,\cdots,N$, the state evolves according to:
\begin{subequations}
\label{algall}
\begin{align}
\xi_L^k &= G_L^k (\xi_L^{k-1}, \eta_1^k, \cdots, \eta_N^k);  && \eta_L^k = H_L^k (\xi_L^k), \label{algL}\\
\xi_i^k &= G_{i,\theta_i}^k(\xi_i^{k-1},\eta_L^{k-1}, \eta_{-i}^{k-1}); && \eta_i^k = H_{i,\theta_i}^k(\xi_i^k), \label{algFi}
\end{align}
\end{subequations}
for $k=1,2,\cdots,n$. 
Finally, we require that
$
\eta_L^n=H_L^n(\xi_L^n)=(\zeta(\theta),\pi(\theta)),
$
which will be the value of the social choice.
In this communication model, the action of the $i$-th follower is the sequence of functions in (\ref{algFi}), i.e., 
$
\sigma_i(\theta_i)=\{(G_{i,\theta_i}^k, H_{i,\theta_i}^k): k=1,2,\cdots, n\}
$
parametrized by his type $\theta_i$. The outcome function is defined by the sequence of functions in (\ref{algL}), i.e.,
$
g=\{(G_L^k, H_L^k): k=1,2,\cdots, n\}.
$
 For a  practical implementation, $n$ must be a finite number.
\end{remark}

Formally, a \emph{mechanism} is a triplet $M=(\sigma, \mathcal{S}, g)$ of a strategy function $\sigma$, the space of collective actions $\mathcal{S}$, and an outcome function $g$.
Notice that a mechanism $M$ suggests followers to employ a particular strategy function specified by $\sigma$, but it is followers' choice to be faithful or not; followers are allowed to take any actions in $\mathcal{S}$.

A particular case with $\mathcal{S}=\Theta$, $g=f$, and $\sigma=Id$ (i.e., identity map) is called a \emph{direct mechanism}, in which followers are asked to report their types $\theta_i$ to the leader directly. A mechanism $M=(\sigma,\mathcal{S}, g)$  is said to be \emph{dominant strategy incentive compatible} if implementing the suggested action $\sigma(\theta)=(\sigma_1(\theta_1), \cdots, \sigma_N(\theta_N))\in \mathcal{S}$ constitutes dominant strategies among the followers. 
However, this requirement turns out to be often difficult to attain in indirect mechanism design settings. Hence, we employ a weaker notion of incentive compatibility.

\begin{definition} 
\label{def:feasible}
A mechanism $M=(\sigma, \mathcal{S}, g)$ is said to be \emph{single fault tolerant} if for any $i$,  
$s_i\in \mathcal{S}_i$, and $\theta\in\Theta$, the mechanism produces a feasible outcome 
$g(s_i,\sigma_{-i}(\theta_{-i}))\in \mathcal{Z}\times \mathcal{P}$. 
\end{definition}
\begin{definition}
\label{defic}
A mechanism  $M=(\sigma, \mathcal{S}, g)$  is said to be \emph{incentive compatible} if $\forall i, \forall s_i\in\mathcal{S}_i, \forall \theta\in \Theta$,
\begin{align*}
\hspace{+.7in}&\hspace{-.7in}v_i(g_\zeta(\sigma_i(\theta_i),\sigma_{-i}(\theta_{-i}));\theta_i)
+g_{\pi_i}(\sigma_i(\theta_i),\sigma_{-i}(\theta_{-i})) \\
&\leq v_i(g_\zeta(s_i,\sigma_{-i}(\theta_{-i}));\theta_i)
+g_{\pi_i}(s_i,\sigma_{-i}(\theta_{-i})).
\end{align*}
In this case, the mechanism is said to \emph{implement} a social choice function $f=g\circ \sigma$ in \emph{ex-post Nash equilibria}\footnote{The term \emph{ex-post} is commonly used to mean that $\sigma_i(\theta_i)$ is the best strategy even without knowing $\theta_{-i}$.  See \cite{RefWorks:288}, Section 9. }. 
\end{definition}
{\upd Intuitively, single fault tolerance requires the mechanism to make a valid social decision (if not optimal) even if at most one follower did not implement suggested strategies faithfully.
Incentive compatibility requires that no follower is incentivized to deviate from the suggested strategy if all other followers faithfully implement suggested strategies.}

\subsection{Indirect Groves mechanism}
\label{secgroves}
{\upd
In this paper, we focus on \emph{efficient} distributed algorithms (i.e., those that minimize social cost), and designing a tax rule that induces the followers' faithful actions in such algorithms.}
The question is rephrased as follows: Given a pair $(g_\zeta, \sigma)$ such that $g_\zeta\circ \sigma$ is efficient, how can we design $g_\pi$ so that $M=(\sigma,\mathcal{S},g)$ is incentive compatible?
\begin{definition}
\label{defindirectgroves}
A mechanism $M=(\sigma, \mathcal{S}, g)$ is said to be in the class of \emph{indirect Groves mechanisms} if, for every $i\in\{1,\cdots, N\}$, there exists a function $k_i: \mathcal{S}_{-i}\rightarrow \mathbb{R}$ satisfying:
\begin{itemize}[leftmargin=*]
\item For every $ \theta_{-i}\in\Theta_{-i}$ and $s_i \in \sigma_i(\Theta_i)$, the tax rule is
\begin{align}
g_{\pi_i}(s_i,&\sigma_{-i}(\theta_{-i})) \nonumber \\
& =  \sum_{j\neq i}  v_j(g_\zeta(s_i,\sigma_{-i}(\theta_{-i}));\theta_j)+k_i(\sigma_{-i}(\theta_{-i})) \label{grovestaxequation}
\end{align}
\item For every $ \theta_{-i}\in\Theta_{-i}$ and $s_i \in \mathcal{S}_i \setminus  \sigma_i(\Theta_i)$, the tax rule satisfies
\begin{align}
g_{\pi_i}(s_i&,\sigma_{-i}(\theta_{-i})) \nonumber  \\
& \geq  \sum_{j\neq i}  v_j(g_\zeta(s_i,\sigma_{-i}(\theta_{-i}));\theta_j)+k_i(\sigma_{-i}(\theta_{-i})). \label{grovestaxineq}
\end{align}
\end{itemize}
\end{definition}

\begin{theorem}\label{theoic}(Sufficiency)
A single fault tolerant mechanism $M=(\sigma,\mathcal{S},g)$ with an efficient decision rule $g_\zeta\circ \sigma$ is incentive compatible if it is in the class of indirect Groves mechanisms.
\end{theorem}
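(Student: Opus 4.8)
The plan is to show that the suggested action $\sigma_i(\theta_i)$ minimizes follower $i$'s net cost $v_i(g_\zeta(\,\cdot\,,\sigma_{-i}(\theta_{-i}));\theta_i)+g_{\pi_i}(\,\cdot\,,\sigma_{-i}(\theta_{-i}))$ over all admissible deviations $s_i\in\mathcal{S}_i$, which is exactly the inequality required by Definition \ref{defic}. The mechanism's engineering of the tax so that each follower internalizes the \emph{total} social cost is the lever: once the Groves tax is substituted, follower $i$'s net cost differs from the full social cost only by a term $k_i(\sigma_{-i}(\theta_{-i}))$ that does not depend on his own action, so minimizing his selfish net cost becomes equivalent to minimizing the social cost, which efficiency already does.

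Concretely, I would fix $i$, $\theta$, and an arbitrary $s_i$, and treat the suggested action and the deviation in parallel. For the suggested action $\sigma_i(\theta_i)\in\sigma_i(\Theta_i)$, I substitute the tax identity (\ref{grovestaxequation}): the private term $v_i$ and the $\sum_{j\neq i}v_j$ inside the tax combine into the full sum, giving
\begin{align*}
&v_i(g_\zeta(\sigma(\theta));\theta_i)+g_{\pi_i}(\sigma(\theta)) \\
&\qquad=\sum_{j=1}^N v_j(g_\zeta(\sigma(\theta));\theta_j)+k_i(\sigma_{-i}(\theta_{-i})).
\end{align*}
For the deviation $s_i$, the analogous substitution uses the equality (\ref{grovestaxequation}) when $s_i\in\sigma_i(\Theta_i)$ and the inequality (\ref{grovestaxineq}) otherwise; in both cases it \emph{lower-bounds} follower $i$'s net cost by $\sum_{j=1}^N v_j(g_\zeta(s_i,\sigma_{-i}(\theta_{-i}));\theta_j)+k_i(\sigma_{-i}(\theta_{-i}))$, with the same constant $k_i$.

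To close the argument I would compare the two social costs, and this is precisely where the two hypotheses enter. Single fault tolerance (Definition \ref{def:feasible}) guarantees that the deviated outcome is feasible, i.e.\ $g_\zeta(s_i,\sigma_{-i}(\theta_{-i}))\in\mathcal{Z}$; this qualifies it as an admissible competitor $z$ in the definition of efficiency of $g_\zeta\circ\sigma$, yielding $\sum_j v_j(g_\zeta(\sigma(\theta));\theta_j)\le\sum_j v_j(g_\zeta(s_i,\sigma_{-i}(\theta_{-i}));\theta_j)$. Adding $k_i(\sigma_{-i}(\theta_{-i}))$ to both sides and chaining with the equality and the lower bound above delivers the incentive-compatibility inequality.

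The only place requiring genuine care — and the reason single fault tolerance is assumed rather than merely convenient — is this feasibility step: efficiency only ranks the suggested outcome against points \emph{inside} $\mathcal{Z}$, so without the guarantee that a unilateral deviation cannot push $g_\zeta(s_i,\sigma_{-i}(\theta_{-i}))$ outside $\mathcal{Z}$, the comparison of social costs would have no basis and the chain would break. Everything else is routine substitution of the Groves tax definition, so I expect no further obstacle.
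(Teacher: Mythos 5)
Your proof is correct and is essentially the paper's argument presented directly rather than by contradiction: both substitute (\ref{grovestaxequation}) for the suggested action and (\ref{grovestaxequation}) or (\ref{grovestaxineq}) for the deviation, then invoke single fault tolerance to make $g_\zeta(s_i,\sigma_{-i}(\theta_{-i}))$ an admissible competitor in the efficiency comparison. No gaps.
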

\begin{proof}
Suppose on contrary that
\begin{align*}
\hspace{+.7in}&\hspace{-.7in}v_i(g_\zeta(\sigma_i(\theta_i), \sigma_{-i}(\theta_{-i}));\theta_i)+g_{\pi_i}(\sigma_i(\theta_i),\sigma_{-i}(\theta_{-i})) \\
&\;\; > v_i(g_\zeta(s_i, \sigma_{-i}(\theta_{-i}));\theta_i)+g_{\pi_i}(s_i,\sigma_{-i}(\theta_{-i})). 
\end{align*}
for some $i, s_i\in \mathcal{S}_i, \theta\in\Theta$.
The equality (\ref{grovestaxequation}) is applicable on the left hand side, while  (\ref{grovestaxequation}) or (\ref{grovestaxineq}) is used on the right hand side depending on $s_i$. 
In both cases, the above inequality implies
$$
\sum_{i=1}^N v_i(g_\zeta \circ \sigma(\theta); \theta_i) 
> \sum_{i=1}^N v_i(g_\zeta(s_i,\sigma_{-i}(\theta_{-i})); \theta_i). 
$$
Since $M$ is single fault tolerant, $g_\zeta(s_i,\sigma_{-i}(\theta_{-i}))\in \mathcal{Z}$. Hence, this is a contradiction to the efficiency of $g_\zeta\circ s$.
\end{proof}
Theorem \ref{theoic} is due to \cite{RefWorks:233}. A less trivial fact is that the converse of Theorem \ref{theoic} also holds when each agent's private parameter space $\Theta_i$ is rich enough.
\begin{assumption}
\label{asmp1}
For every $i$ and every quadratic function $q(\cdot):\mathcal{Z}\rightarrow \mathbb{R}$, there exists $\theta_i\in\Theta_i$ such that $q(\cdot)=v_i(\cdot;\theta_i)$.
\end{assumption}

\begin{theorem} 
\label{theoconverse}(Necessity)
Suppose Assumption \ref{asmp1} holds. A single fault tolerant mechanism $M=(\sigma, \mathcal{S}, g)$ with an efficient decision rule $g_\zeta \circ \sigma$  is incentive compatible only if it is in the class of indirect Groves mechanisms.
\end{theorem}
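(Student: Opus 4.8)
The plan is to run an indirect analogue of the Green--Laffont argument, holding fixed throughout an agent $i$ and a profile $\theta_{-i}\in\Theta_{-i}$ and abbreviating $W(z):=\sum_{j\neq i}v_j(z;\theta_j)$, $z_{s}:=g_\zeta(s,\sigma_{-i}(\theta_{-i}))$ and $P(s):=g_{\pi_i}(s,\sigma_{-i}(\theta_{-i}))$ for $s\in\mathcal{S}_i$. In this notation incentive compatibility (Definition \ref{defic}) reads $v_i(z_{\sigma_i(\theta_i)};\theta_i)+P(\sigma_i(\theta_i))\le v_i(z_{s};\theta_i)+P(s)$ for every $\theta_i$ and every $s\in\mathcal{S}_i$, while efficiency of $g_\zeta\circ\sigma$ says that $z_{\sigma_i(\theta_i)}$ minimizes $v_i(\cdot;\theta_i)+W(\cdot)$ over $\mathcal{Z}$; single fault tolerance (Definition \ref{def:feasible}) guarantees $z_s\in\mathcal{Z}$, so these comparisons are legitimate. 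The goal is to exhibit the constant $k_i$ and verify (\ref{grovestaxequation}) on $\sigma_i(\Theta_i)$ and (\ref{grovestaxineq}) off it.

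First I would establish a taxation principle on the equilibrium path: if two types induce the same honest decision, $z_{\sigma_i(\theta_i)}=z_{\sigma_i(\theta_i')}$, then applying incentive compatibility in both directions (each type mimicking the other) cancels the $v_i$ terms and forces $P(\sigma_i(\theta_i))=P(\sigma_i(\theta_i'))$. Hence the equilibrium tax is a function of the induced decision alone, $P(\sigma_i(\theta_i))=\rho(z_{\sigma_i(\theta_i)})$, and it remains only to show $\rho(z)=W(z)+k_i$ on the set of honest decisions.

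The core step invokes Assumption \ref{asmp1} to probe agent $i$ with quadratic costs. Given a target point $z_0$ (smooth, interior), I would take the strongly convex quadratic $v_i(z;\hat\theta_i)=\tfrac{M}{2}\|z-z_0\|^2-\nabla W(z_0)^\top(z-z_0)$ with $M$ large; then $v_i(\cdot;\hat\theta_i)+W$ is strongly convex with unique minimizer $z_0$, so $z_0$ is realized as the honest decision of $\hat\theta_i$. To pin $\rho-W$ down, I would connect two probing types $\hat\theta_i^0,\hat\theta_i^1$ by the segment $v_i(\cdot;\theta_i^t)=(1-t)\,v_i(\cdot;\hat\theta_i^0)+t\,v_i(\cdot;\hat\theta_i^1)$, which stays quadratic (hence admissible) and keeps $v_i(\cdot;\theta_i^t)+W$ strongly convex with unique honest decision $z_t$. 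Writing $V(t)=\min_z[v_i(z;\theta_i^t)+W(z)]$ and the truthful net cost $U(t)=v_i(z_t;\theta_i^t)+\rho(z_t)$, both are minima of families affine in $t$, so the envelope (Danskin) theorem gives $V'(t)=U'(t)=v_i(z_t;\hat\theta_i^1)-v_i(z_t;\hat\theta_i^0)$ wherever differentiable; since $U,V$ are concave (hence locally Lipschitz) this forces $U-V=\rho(z_t)-W(z_t)$ to be constant along the path. Letting the endpoints range over all target points identifies a single constant $k_i$ with $\rho(z)=W(z)+k_i$ on all probed decisions, and since every honest decision coincides with the honest decision of some probing type, (\ref{grovestaxequation}) follows (the whole construction depends on $\theta_{-i}$ only through the others' realized actions, yielding the $k_i(\sigma_{-i}(\theta_{-i}))$ of Definition \ref{defindirectgroves}).

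Finally, for an off-path action $s\in\mathcal{S}_i\setminus\sigma_i(\Theta_i)$ I would realize its outcome $z_s$ as a unique efficient point by the same steep-well construction, obtaining a type $\theta_i$ with $z_{\sigma_i(\theta_i)}=z_s$. Incentive compatibility of that type against the deviation $s$ then cancels the common $v_i(z_s;\theta_i)$ terms and yields $P(s)\ge P(\sigma_i(\theta_i))=W(z_s)+k_i$, which is exactly (\ref{grovestaxineq}). I expect the main obstacle to be the core step: unlike the classical full-domain Green--Laffont proof, one cannot force two distinct strict minimizers using only quadratic costs, so the equality $\rho=W+k_i$ must be extracted through the envelope/convex-path argument rather than a single two-minima comparison, and the probing construction needs mild regularity (differentiability of $W$ and interiority of the relevant decisions) to make $z_0$ an exact unconstrained minimizer.
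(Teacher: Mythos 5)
Your architecture matches the paper's: first a taxation principle (two on-path actions inducing the same decision must carry the same tax --- exactly Step 1 of Case 1 in the paper's appendix), then a quadratic probing of agent $i$ via Assumption \ref{asmp1} to identify $\rho-W$ with a constant on the equilibrium path, and finally an incentive-compatibility comparison against a probing type to obtain (\ref{grovestaxineq}) off path. The core identification step is where you diverge and where the gap lies. Your probing type $v_i(z;\hat\theta_i)=\tfrac{M}{2}\|z-z_0\|^2-\nabla W(z_0)^{\top}(z-z_0)$ requires $W=\sum_{j\neq i}v_j(\cdot;\theta_j)$ to be differentiable at $z_0$ and to admit a global quadratic lower bound (so that a large $M$ actually makes $z_0$ the constrained minimizer), and the envelope argument along the segment further needs the intermediate efficient points to behave on a possibly non-convex $\mathcal{Z}$. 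None of this is supplied by the hypotheses: Assumption \ref{asmp1} is a richness condition on $\Theta_i$ only, and the opponents' cost functions $v_j(\cdot;\theta_j)$ are otherwise arbitrary --- not necessarily smooth or convex. Calling this ``mild regularity'' understates the problem; as written, the probing types need not realize the intended decisions, and the identification of $k_i$ collapses.

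The paper's device for exactly this difficulty is worth internalizing: instead of dominating $W$ with a steep well, it cancels $W$ outright, taking $v_i(z;\theta_i'')=-\sum_{j\neq i}v_j(z;\theta_j)+\epsilon\bigl(\|z-g_\zeta(s_i',\sigma_{-i}(\theta_{-i}))\|^2/D-1\bigr)$ with $D$ a normalizing constant, so that the social objective $v_i(\cdot;\theta_i'')+W$ becomes a known strictly convex quadratic whose unconstrained minimizer is the feasible point $g_\zeta(s_i',\sigma_{-i}(\theta_{-i}))$ --- hence the unique constrained minimizer irrespective of the geometry of $\mathcal{Z}$ and with no smoothness of $W$ required. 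The price is that $v_i(\cdot;\theta_i'')$ must itself be quadratic to be admissible under Assumption \ref{asmp1}, which the paper arranges by working with opponent types whose costs are linear. With that construction a single two-point comparison (combined with your taxation principle) forces $k_i(s_i,\cdot)=k_i(s_i',\cdot)$ up to $\epsilon$, and no envelope theorem is needed. Your off-path step is more direct in spirit than the paper's (which instead derives a violation of efficiency under a degenerate constant-cost profile), but it inherits the same reliance on the steep-well construction; substituting the paper's cancellation construction for your probing types lets the rest of your outline go through.
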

\begin{proof}
Complete proof can be found in Appendix \ref{apendnec}. The basic idea of the proof is attributed to the celebrated result by Green and Laffont \cite{green1977}. The proof for Case 1 is a modification of Theorem 10.4.3 in \cite{RefWorks:100}.
\end{proof}

\begin{remark} Unlike the Groves mechanism in the direct mechanism design, the indirect Groves mechanism does not generally implement the desired algorithm in a dominant strategy. 
Indeed, it was shown by Proposition 9.23 in \cite{RefWorks:288} that an indirect mechanism is dominant strategy incentive compatible only if every map $\sigma_i:\Theta_i\rightarrow \mathcal{S}_i$ is surjective.
We will see a concrete example of this fact in Example \ref{solconcept}.  
\end{remark}
\begin{remark}
\label{remchangerule}
The set $\mathcal{S}$ can be extremely rich, since it is the space of all programs that can be executed by the followers during the course of the algorithm. For instance, followers are allowed to write a code to learn about other followers during the algorithm to make future decisions. However, as an implicit premise for Theorem \ref{theoic} and \ref{theoconverse}, we must preclude the followers' ability to ``hack the rule of the game." For instance, the intended algorithm must be securely announced to every follower without strategic interventions by other followers. Similarly, the tax value must be securely computed based on the formula  (\ref{grovestaxequation}) and (\ref{grovestaxineq}) without a danger of manipulation. Such interventions are actually possible if a follower has an opportunity to modify other players' messages  \cite{monderer1999}. For the same reason, followers are not allowed to drop out of the game in midstream to escape from a punitive tax.
\end{remark}
\begin{remark}
Direct communication links between the leader and the followers are not constantly required during the course of the algorithm.
For instance, the main body of the average consensus algorithm in Section \ref{secaverage} requires only peer-to-peer communications among neighboring followers, but followers still cannot be better off by cheating. However, in order to satisfy the requirement of the previous remark, secure communication links between the leader and followers are assumed at the initial phase (to announce the algorithm) and at the final phase (to calculate taxes securely).
\end{remark}

\subsection{Individual rationality, Budget balance}
{\upd A mechanism is said to be \emph{individually rational}~\cite{RefWorks:100} if the net cost $v_i(g_\zeta(s);\theta_i)+g_{\pi_i}(s)$ is non-positive for every $i$.
This is a basic requirement for a mechanism that does not incentivize the followers to quit the mechanism, when quitting the mechanism is cost-free for them.

A mechanism is said to be \emph{budget balanced} (resp. \emph{weakly budget balanced})~\cite{RefWorks:100} if the tax income $\sum_{i=1}^N p_i$ is zero (resp. non-negative).

Unfortunately, there may not exist a mechanism that simultaneously satisfies (1) efficiency, (2) incentive compatibility, (3) individual rationality, and (4) budget balance.
The aforementioned result by Green and Laffont \cite{green1977} shows that the only efficient direct mechanisms that are dominant strategy incentive compatible are Groves mechanisms. 
This observation allows us to construct a simple example in which no efficient mechanism simultaneously achieves dominant strategy incentive compatibility, weak budget balance, and individual rationality. 
In a Bayesian setting, \cite{myerson1983efficient} demonstrated that there exists a simple exchange environment in which no (\emph{ex post}) efficient mechanism is simultaneously (Bayes-Nash) incentive compatible, weakly budget balanced, and (\emph{ex interim}) individually rational.
This result was generalized in \cite{krishna1998efficient} using the revenue maximization principle.
We also note that a recent study \cite{marden2009} of a particular indirect mechanism shows that no budget balanced mechanism implements efficient decisions in Nash equilibrium.
 }

\section{Asymptotic incentive compatibility}
\label{secaic}
{\upd In this section, we generalize Theorem \ref{theoic} so that it is applicable to \emph{approximately} efficient decision rules. 
This is an important generalization, since in many realistic cases, 
social decision must be made upon the result of iterative numerical optimizations over continuous decision variables that, if terminated at some finite step, only returns an approximate solution.} In such cases, incentives may even be needed to guarantee not only that the suggested algorithm is implemented, but also that the actions taken by the followers lead it to converge.
Define $\mathrm{dist}(z',\mathcal{Z})=\inf_{z\in\mathcal{Z}} \|z'-z\|_2 $, and let $\text{Proj}(z';\mathcal{Z})$ be the projection of $z'$ onto $\mathcal{Z}$.
\begin{definition}
\label{defasymeff}
For every $n\in\mathbb{N}$, let $\zeta^n:\Theta\rightarrow {\upd \mathbb{R}^{n_z}}$ be a decision rule. A sequence of decision rules $\{\zeta^n\}_{n\in\mathbb{N}}$ is said to be \emph{asymptotically efficient} if $\mathrm{dist}(\zeta^n(\theta),\mathcal{Z})\rightarrow 0$ as $n\rightarrow \infty$ and for every $\epsilon>0$, there exists $n_0\in \mathbb{N}$ such that for all $n\geq n_0$,
$$
\sum_{i=1}^N v_i(\zeta^n(\theta);\theta_i)\leq
\sum_{i=1}^N v_i(z;\theta_i)+\epsilon,
\;\; \forall \theta\in \Theta, \; \forall z\in {\upd \mathcal{Z}}.
$$
\end{definition}

\begin{definition}
\label{defaic}
A sequence of mechanisms $\{M^n\}_{n\in\mathbb{N}}$, $M^n=(\sigma^n,\mathcal{S}^n,g^n)$, is said to be \emph{asymptotically incentive compatible} if for every $\epsilon>0$, there exists $n_0\in \mathbb{N}$ such that for $n\geq n_0$,
\begin{align*}
\hspace{.7in}&\hspace{-.7in}v_i\left(g_\zeta^n(\sigma_i^n(\theta_i),\sigma_{-i}^n(\theta_{-i})));\theta_i\right)
+g_{\pi_i}^n (\sigma_i^n(\theta_i),\sigma_{-i}^n(\theta_{-i})) \\
&\hspace{-.2in}\leq  v_i\left(g_\zeta^n (s_i^n,\sigma_{-i}^n(\theta_{-i}));\theta_i\right)
+g_{\pi_i}^n (s_i^n,\sigma_{-i}^n(\theta_{-i})) +\epsilon 
\end{align*}
$\forall i, \forall s_i^n\in\mathcal{S}_i^n, \forall \theta\in \Theta$. In this case, the mechanism is said to \emph{asymptotically implement} a social choice function $f:=\lim_{n\rightarrow \infty} g^n\circ \sigma^n$ in ex-post Nash equilibria, if the limit exists.
\end{definition}

\begin{remark}
{\upd For iterative algorithms, $n$ can be understood as the number of iterations before termination. 
Since this only gives an approximation of efficient social decisions, $M^n$ may not be incentive compatible for a fixed $n\in\mathbb{N}$. } However, as $n\rightarrow \infty$, $M^n$ provides every follower a diminishing incentive to deviate from the suggested  algorithm. Without loss of generality, we assume that taxes are paid after the algorithm has terminated.
\end{remark}

The next proposition presents a sequence of mechanisms $\{M^n\}_{n\in\mathbb{N}}$ motivated by the Groves mechanism by which the followers' incentive to deviate from the intended algorithm can be made arbitrary small.
\begin{proposition} \label{tho:general}
Let $\{M^n\}_{n\in\mathbb{N}}$, $M^n=(\sigma^n,\mathcal{S}^n,g^n)$, be a sequence of single fault tolerant mechanisms such that $\{g_\zeta^n \circ \sigma^n\}_{n\in\mathbb{N}}$ is asymptotically efficient. 
If the payment rule is
\[ g_{\pi_i}^n(s^n)=\sum_{j\neq i} v_j(g_\zeta^n(s^n);\theta_j)+k_i(s_{-i}^n)\;\; \forall s^n\in \mathcal{S}^n, \]
then $\{M^n\}_{n\in\mathbb{N}}$ is asymptotically incentive compatible.
\end{proposition}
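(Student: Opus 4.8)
The plan is to mimic the structure of the proof of Theorem~\ref{theoic}, with the exact efficiency bound there replaced by the $\epsilon$-slack supplied by asymptotic efficiency. First I would write down the asymptotic incentive compatibility inequality of Definition~\ref{defaic} and substitute the prescribed payment rule $g_{\pi_i}^n(s^n)=\sum_{j\neq i}v_j(g_\zeta^n(s^n);\theta_j)+k_i(s_{-i}^n)$ into both sides. On the left the argument is the faithful profile $(\sigma_i^n(\theta_i),\sigma_{-i}^n(\theta_{-i}))$ and on the right the deviation $(s_i^n,\sigma_{-i}^n(\theta_{-i}))$; crucially, in both cases the payment carries the \emph{same} constant $k_i(\sigma_{-i}^n(\theta_{-i}))$, since $k_i$ depends only on the opponents' actions, which are held fixed at $\sigma_{-i}^n(\theta_{-i})$.

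The key algebraic step is that the own-cost term $v_i(g_\zeta^n(\cdot);\theta_i)$ combines with the payment term $\sum_{j\neq i}v_j(g_\zeta^n(\cdot);\theta_j)$ to form the full social cost $\sum_{j=1}^N v_j(g_\zeta^n(\cdot);\theta_j)$ on each side, while the $k_i(\sigma_{-i}^n(\theta_{-i}))$ terms cancel. Thus the inequality to be established collapses to the purely social-cost comparison
\[\sum_{j=1}^N v_j(g_\zeta^n(\sigma^n(\theta));\theta_j)\le \sum_{j=1}^N v_j(g_\zeta^n(s_i^n,\sigma_{-i}^n(\theta_{-i}));\theta_j)+\epsilon,\]
whose left side is exactly $\sum_{j=1}^N v_j(\zeta^n(\theta);\theta_j)$ for the asymptotically efficient rule $\zeta^n=g_\zeta^n\circ\sigma^n$.

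To close the argument I would invoke single fault tolerance (Definition~\ref{def:feasible}): it guarantees $g_\zeta^n(s_i^n,\sigma_{-i}^n(\theta_{-i}))\in\mathcal{Z}$, so the right-hand decision is an admissible competitor $z\in\mathcal{Z}$ in the definition of asymptotic efficiency. Asymptotic efficiency then yields, for every $\epsilon>0$, an $n_0$ such that for all $n\ge n_0$ the displayed social-cost inequality holds for every $\theta\in\Theta$ and every $z\in\mathcal{Z}$, in particular for the specific $z$ generated by each deviation.

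Rather than a deep obstacle, the main point to verify is the \emph{uniformity} of $n_0$: Definition~\ref{defaic} demands a single $n_0$ valid for all $i$, all $s_i^n\in\mathcal{S}_i^n$, and all $\theta\in\Theta$. This is delivered for free, because the $n_0$ coming from asymptotic efficiency already quantifies uniformly over all $\theta\in\Theta$ and all $z\in\mathcal{Z}$; since single fault tolerance places every deviation-induced decision inside $\mathcal{Z}$, no deviation can escape this uniform bound. Hence the same $n_0$ witnesses asymptotic incentive compatibility, completing the proof.
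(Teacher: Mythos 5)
Your proposal is correct and follows essentially the same route as the paper: substitute the Groves-type payment so the $k_i$ terms cancel, combine own cost with the payment to get the social cost, use single fault tolerance to place the deviation-induced decision in $\mathcal{Z}$, and invoke asymptotic efficiency. The paper merely phrases this as a contradiction along a subsequence, whereas you argue directly and make the uniformity of $n_0$ explicit; the content is identical.
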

\begin{proof}
Suppose there exist {\upd $\epsilon>0$}, $i$, $\theta\in\Theta$, a sequence of strategies $s^n_i\in\mathcal{S}_i^n$, and a subsequence $\{n_\ell\}_{\ell\in\mathbb{N}}$ such that
\begin{align*}
\hspace{.7in}&\hspace{-.7in}v_i(g_\zeta^{n_\ell}(\sigma_i^{n_\ell}(\theta_i),\sigma_{-i}^{n_\ell}(\theta_{-i})));\theta_i)
+g_{\pi_i}^{n_\ell} (\sigma_i^{n_\ell}(\theta_i),\sigma_{-i}^{n_\ell}(\theta_{-i})) \\
&\hspace{-.4in}>  v_i(g_\zeta^{n_\ell} (s_i^{n_\ell},\sigma_{-i}^{n_\ell}(\theta_{-i}));\theta_i)
+g_{\pi_i}^{n_\ell} (s_i^{n_\ell},\sigma_{-i}^{n_\ell}(\theta_{-i})) +\epsilon 
\end{align*}
for all $\ell\in \mathbb{N}$. This implies that for all $\ell \in \mathbb{N}$,
\begin{align*}
\sum_{i=1}^N v_i(&g_\zeta^{n_\ell}(\sigma_i^{n_\ell}(\theta_i),\sigma_{-i}^{n_\ell}(\theta_{-i})));\theta_i) \\[-.5em]
&> \sum_{i=1}^N  v_i(g_\zeta^{n_\ell} (s_i^{n_\ell},\sigma_{-i}^{n_\ell}(\theta_{-i}));\theta_i)+\epsilon.
\end{align*}
Since $g_\zeta^{n_\ell} (s_i^{n_\ell},\sigma_{-i}^{n_\ell}(\theta_{-i}))\in \mathcal{Z}$ due to the single fault tolerance, this contradicts the asymptotic efficiency of $g_\zeta^n\circ \sigma^n$.
\end{proof}
{\upd
In practice, the result of Proposition~\ref{tho:general} is used as follows. First, the leader chooses $\epsilon>0$ to which he wishes to diminish the followers' incentives to misbehave. Second, the leader identifies $n_0\in\mathbb{N}$ satisfying the condition in Definition~\ref{defasymeff} by analyzing the asymptotically efficient sequence of decision rules to be implemented. Finally, the leader announces a mechanism $M^n$ as defined in Proposition~\ref{tho:general} with some $n\geq n_0$. As a result, no follower has more than $\epsilon$ incentive to deviate from the suggested algorithm.}

Proposition~\ref{tho:general} requires $M^n$ to be single fault tolerant, i.e., that the social decision made by $M^n$ be feasible even if one of the agents is misbehaving. 
In some applications (such as \emph{dual decomposition} algorithms to be considered in Section~\ref{secdd}), this requirement can be met by simply projecting the intermediate result onto the feasible set. 
However, this is not always possible (as we will see in Section \ref{secaverage} where the \emph{average consensus} algorithm over a cyber-physical system is considered).
To circumvent this difficulty, we need a sequence of tax rules $\{g_{\pi_i}^n\}$ that leads the algorithm to converge to the optimal solution even under the lack of single fault tolerance. Several ideas can be exploited.
The following result is useful when it is easy for the leader to observe $\mathrm{dist}(\cdot,\mathcal{Z})$ and the knowledge of the upper bound $\beta(n)$ on $\mathrm{dist}(g_\zeta^n(\sigma^n(\theta)),\mathcal{Z})$ is available \emph{a priori}.
Roughly speaking, it penalizes all followers if the expected convergence rate (to the feasible set) is not observed.

\begin{proposition} \label{Prop:Infeasible} Assume that $\mathcal{Z}$ is a closed set and $v_i(z,\theta_i)$, $\forall i$, is a continuous function of $z$ for all $\theta_i\in\Theta_i$. Let $\{M^n\}_{n\in\mathbb{N}}$, $M^n=(\sigma^n,\mathcal{S}^n,g^n)$, be a sequence of mechanisms such that (\textit{i})~$\{g_\zeta^n \circ \sigma^n\}_{n\in\mathbb{N}}$ is asymptotically efficient and (\textit{ii})~$\sup_{\theta\in\Theta} \mathrm{dist}(g_\zeta^n(\sigma^n(\theta)),\mathcal{Z})\leq \beta(n)$ for some sequence $\{\beta(n)\}_{n\in\mathbb{N}}$ with $\lim_{n\rightarrow \infty} \beta(n)=0$. If for any $s^n\in\mathcal{S}^n$, a payment rule is chosen as
\[
g_{\pi_i}^n(s^n)\!=\!\begin{cases}
\sum_{j\neq i}v_j(g_\zeta^n(s^n);\theta_j) & \text{ if } \mathrm{dist}(g_\zeta^n(s^n),\mathcal{Z})\leq \beta(n) \\
C_i & \text{ otherwise }
\end{cases}
\]
then, for sufficiently large $C_i$,  $\{M^n\}_{n\in\mathbb{N}}$ is asymptotically incentive compatible.
In particular, the following choice suffices
\begin{align*}
C_i=-\inf_{\theta_i\in\Theta_i}\inf_{z\in \mathcal{Z}'} v_i(z;\theta_i)+ \sup_{\theta\in\Theta}\sup_{z\in \mathcal{Z}'}\sum_{j=1}^N v_j(z;\theta_j),\;\forall i
\end{align*}
where $\mathcal{Z}'=\{g_\zeta^n(\sigma^n(\theta)):\forall \theta\in\Theta\}$. 
\end{proposition}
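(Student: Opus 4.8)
The plan is to run the same contradiction argument as in Proposition~\ref{tho:general}, but to split the analysis according to whether a follower's deviation $s_i^n$ keeps the decision inside the $\beta(n)$-band around $\mathcal{Z}$ or pushes it outside. Write $z_*^n = g_\zeta^n(\sigma^n(\theta))$ for the honest decision and $z' = g_\zeta^n(s_i^n,\sigma_{-i}^n(\theta_{-i}))$ for the decision after $i$ deviates. By hypothesis (ii), $\mathrm{dist}(z_*^n,\mathcal{Z})\le \beta(n)$, so the honest profile always lands in the first branch of the payment rule, and follower $i$'s honest net cost is $\sum_{j=1}^N v_j(z_*^n;\theta_j)$, the full social cost at $z_*^n$.

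First I would treat the \emph{near-feasible} deviations, i.e.\ those with $\mathrm{dist}(z',\mathcal{Z})\le \beta(n)$. Here both profiles are taxed by the Groves-type rule, so follower $i$'s deviated net cost also equals a social cost, $\sum_j v_j(z';\theta_j)$, and the $\epsilon$-incentive inequality collapses to $\sum_j v_j(z_*^n;\theta_j)\le \sum_j v_j(z';\theta_j)+\epsilon$. Since $\mathcal{Z}$ is closed, set $\bar z = \mathrm{Proj}(z';\mathcal{Z})$, so $\|z'-\bar z\|_2=\mathrm{dist}(z',\mathcal{Z})\le \beta(n)$. Asymptotic efficiency gives, for $n$ large, $\sum_j v_j(z_*^n;\theta_j)\le \sum_j v_j(\bar z;\theta_j)+\epsilon/2$ because $\bar z\in\mathcal{Z}$; and continuity of the $v_j$ together with $\beta(n)\to 0$ lets me bound $\sum_j v_j(\bar z;\theta_j)-\sum_j v_j(z';\theta_j)\le \epsilon/2$ for $n$ large. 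Chaining the two estimates closes this case.

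Next I would handle the \emph{far} deviations, $\mathrm{dist}(z',\mathcal{Z})>\beta(n)$, where the deviated tax is the flat penalty $C_i$ and the deviated net cost is $v_i(z')+C_i$. The honest net cost $\sum_j v_j(z_*^n;\theta_j)$ is at most $S:=\sup_{\theta\in\Theta}\sup_{z\in\mathcal{Z}'}\sum_j v_j(z;\theta_j)$, the second term of $C_i$, since $z_*^n\in\mathcal{Z}'$. Writing $I_i:=\inf_{\theta_i\in\Theta_i}\inf_{z\in\mathcal{Z}'}v_i(z;\theta_i)$ so that $C_i=S-I_i$, the target inequality $\sum_j v_j(z_*^n;\theta_j)\le v_i(z')+C_i$ follows from $S\le v_i(z')+(S-I_i)$, i.e.\ from $v_i(z')\ge I_i$. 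Thus the stated $C_i$ is exactly calibrated to dominate any cost reduction a follower could obtain by driving the decision infeasible, and this branch requires no largeness of $n$. Taking the maximum of the finitely many thresholds $n_0$ over $i=1,\dots,N$ and over the two cases yields a single $n_0$.

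The main obstacle is the uniform control demanded in the near-feasible case: the bound $\sum_j v_j(\bar z;\theta_j)-\sum_j v_j(z';\theta_j)\le \epsilon/2$ must hold simultaneously for every $\theta\in\Theta$ and every admissible deviation, yet $\bar z$ ranges over a possibly unbounded subset of $\mathcal{Z}$ and $\theta_j$ varies. Pointwise continuity in $z$ is not enough to produce one $n_0$; I would upgrade it to uniform continuity of $(z,\theta)\mapsto v_j(z;\theta_j)$ on a compact neighborhood of $\mathcal{Z}'$, which is precisely where closedness of $\mathcal{Z}$ (so the projection exists) and compactness of the relevant decision and type ranges do the work. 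A secondary subtlety, implicit in the far case, is that the displayed $C_i$ dominates only when $v_i(z')\ge I_i$ for \emph{every} reachable decision $z'$; this holds automatically whenever the outcome map keeps $v_i$ bounded below by its infimum over $\mathcal{Z}'$, and otherwise $C_i$ must simply be enlarged, consistent with the statement's ``for sufficiently large $C_i$.''
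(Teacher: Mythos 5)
Your proof is correct and follows essentially the same route as the paper's: the same split into near-feasible deviations (where both tax branches are Groves-type and the claim reduces to asymptotic efficiency plus continuity across the vanishing $\beta(n)$-band, via projection onto the closed set $\mathcal{Z}$) and far deviations (where the flat penalty $C_i=S-I_i$ is calibrated so that the deviator's net cost dominates the honest social cost). The two subtleties you flag --- uniformity of the continuity estimate over $\theta$ and deviations, and the need for $v_i(z')\ge I_i$ at infeasible outcomes --- are handled (or glossed over) in the paper's appendix in exactly the way you describe, so no further comparison is needed.
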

\ifdefined\SHORTVERSION
\begin{proof}
The tax rule belongs to the indirect Groves class (Definition \ref{defindirectgroves}). Proof can be found in \cite{Tanaka:arxiv}.
\end{proof}
\fi

\ifdefined\LONGVERSION
\begin{proof}
The tax rule belongs to the indirect Groves class (Definition \ref{defindirectgroves}). Proof can be found in Appendix \ref{apendprop2}.
\end{proof}
\fi


\begin{remark}
Practical usefulness of the notion of asymptotic incentive compatibility heavily depends on the computational complexity of the algorithm. For computationally hard problems, realistically there is no mechanism that reduces undesirable incentive to $\epsilon$ in polynomial time. In such cases, asymptotic incentive compatibility may not be a convincing reasoning to induce faithful behaviors of followers. However, the issue of computational complexity requires more problem-specific discussions, which is not our focus in this paper.
\end{remark}

\section{Faithful implementation of dual decomposition}
\label{secdd}
\subsection{Algorithm}
\newcommand{\x}{z}
\newcommand{\X}{\mathcal{Z}}

Recall the dual decomposition algorithm considered in Section \ref{secresourceallocation}. Based on the developments so far, we are now going to design a tax rule that incentivize followers to execute (\ref{ahu1}) faithfully. The idea is to design an asymptotically incentive compatible sequence of mechanisms $M^n=(\sigma^n,\mathcal{S}^n,g^n)$, parameterized by the number of iterations $n$.
Proposition \ref{tho:general} shows that tax rules attaining our goal are not unique, since the choice of $k_i$ is arbitrary. In this section, we employ a particular tax rule among them inspired by the VCG mechanism. {\upd As we will see in the sequel, this tax rule turns out to be a natural choice since it is intimately related to the notion of ``market-clearing prices."} 

The VCG mechanism is also called the \emph{pivot mechanism},
since the tax for the follower $i$ is calculated based on the degree to which his presence/absence changes the social cost \cite{RefWorks:83}.
For every $i\in\{1,\cdots, N\}$, consider the ``marginal optimization problem" $P_{-i}(x)$ defined by
\begin{align}
P_{-i}(x): \;\; & \min \; \sum_{j\neq i} v_j(z_j;\theta_j)  \label{marginalprob}\\
& \text{ s.t.} \; \sum_{j\neq i} R_jz_j=c-R_ix. \nonumber
\end{align}
Notice that $P_{-i}(x)$ is the optimization problem  (\ref{originalprob}) in which follower $i$'s allocation is fixed at $z_i=x$. If his absence means $z_i=0$, the optimal social cost in his absence is obtained by solving $P_{-i}(0)$.
Let $(z^*, \lambda^*)$ and $(z^{-i*}(x), \lambda^{-i*}(x))$ be the primal-dual optimal solution to (\ref{originalprob})  and (\ref{marginalprob}) respectively.
The VCG-like tax for the $i$-th follower is defined by
\begin{equation}
\label{pvcg}
p_i^{VCG}=\sum_{j\neq i} v_j(z_j^*;\theta_j)-\sum_{j\neq i} v_j(z_j^{-i*}(0);\theta_j).
\end{equation}
In Algorithm \ref{distalg:reducedtax}, we propose a VCG-like mechanism for the faithful implementation of the dual decomposition algorithm applied to the resource allocation problem (\ref{originalprob}). In the iteration, variables $(\hat{z}^k, \lambda^k)$ and $(\hat{z}^{-i,k}, \lambda^{-i,k})$ are intended to approximate  $(z^*, \lambda^*)$ and $(z^{-i*}{\upd (0)}, \lambda^{-i*}{\upd (0)})$ respectively. 

\begin{algorithm}[!t]
\small
\caption{Distributed VCG mechanism $M^n=(\sigma^n,\mathcal{S}^n,g^n)$ for dual decomposition algorithms.}
\begin{algorithmic}[1]
\label{distalg:reducedtax} 
\small
\OUTPUT Allocation decision $\zeta^n(\theta)$ and tax assignments $\pi^n(\theta)$
\STATE (L) The leader announces the following algorithm;
\STATE // Solve optimization problem (\ref{originalprob}) by dual decomposition;
\STATE (L) Initialize and broadcast $\lambda^0$;
\FOR{$k=1,\dots,n$}
\STATE (F) Find $\hat{\x}_i^k=\argmin_{\x_i}(v_i(\x_i;\theta_i)+(\lambda^{k-1})^{\top}R_i\x_i)$ and report the result to the leader;
\STATE (L) Update and broadcast $\lambda^{k}=\lambda^{k-1}+\gamma (R\hat{\x}^{k}-c);$
\ENDFOR
\STATE // Solve marginal problems $P_{-j}(0)$ for every $j$;
\FOR{$j=1,\dots,N$}
\STATE (L) Initialize and broadcast $\lambda^{-j,0}$;
\FOR{$k=1,\dots,n$}
\STATE (F) Every follower $i(\neq j)$ finds and reports \\ $\hspace{.3in}\hat{\x}_i^{-j,k}\hspace{-.03in}=\hspace{-.03in}\argmin_{\x_i}(v_i(\x_i;\theta_i)+(\lambda^{-j,k-1})^{\top} \hspace{-.04in} R_i\x_i);$
\STATE (L) Update and broadcast \\ $\hspace{.3in}\lambda^{-j,k}=\lambda^{-j,k-1}+\gamma^{-j} (\sum_{i\neq j}R_i\hat{\x}^{-j,k}-c)$;
\ENDFOR
\ENDFOR
\STATE // Compute social outcomes;
\STATE (L) Compute and broadcast $z=\text{Proj}(\hat{z}^n,\mathcal{Z})$ and  $z^{-j}=\text{Proj}(\hat{z}^{-j,n},\mathcal{Z}^{-j}) $ for every  $j=1,\dots, N$;
\STATE (F) Every follower $i$ computes $\hat{v}_i=v_i(z_i;\theta_i)$ and $\hat{v}_i^{-j}=v_i(z_i^{-j};\theta_i)$ for every $j\neq i$ and report them to the leader;
\STATE (L) Determine taxes $\pi_i^n(\theta)=\sum_{j\neq i} \hat{v}_j-\sum_{j\neq i} \hat{v}_j^{-i}$  and allocations $\zeta^n(\theta)=z$;
\end{algorithmic}
\end{algorithm} 

\begin{proposition} \label{prop:1}
Assume that $v_i(\cdot;\theta_i)$, $i=1,2,\dots,N$, are strictly convex for every $\theta_i\in\Theta_i$. 
Then the sequence of mechanisms $\{M^n\}_{n\in\mathbb{N}}$ provided in Algorithm~\ref{distalg:reducedtax} is asymptotically incentive compatible.
\end{proposition}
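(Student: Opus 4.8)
The plan is to verify that the sequence of mechanisms produced by Algorithm~\ref{distalg:reducedtax} satisfies the three hypotheses of Proposition~\ref{tho:general}: single fault tolerance, asymptotic efficiency of $\{g_\zeta^n\circ\sigma^n\}$, and a payment rule of the indirect Groves form $g_{\pi_i}^n(s^n)=\sum_{j\neq i}v_j(g_\zeta^n(s^n);\theta_j)+k_i(s_{-i}^n)$. Once all three are in place, asymptotic incentive compatibility follows immediately from that proposition.

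First I would read off the tax assigned in the last line of the algorithm, $\pi_i^n=\sum_{j\neq i}\hat v_j-\sum_{j\neq i}\hat v_j^{-i}=\sum_{j\neq i}v_j(z_j;\theta_j)-\sum_{j\neq i}v_j(z_j^{-i};\theta_j)$, where $z=g_\zeta^n(s^n)=\text{Proj}(\hat z^n,\mathcal{Z})$ is the announced allocation and $z^{-i}=\text{Proj}(\hat z^{-i,n},\mathcal{Z}^{-i})$ is the approximate solution of the marginal problem $P_{-i}(0)$. The first sum is exactly the social-surplus term $\sum_{j\neq i}v_j(g_\zeta^n(s^n);\theta_j)$ of the Groves template. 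The crux is to argue that the second sum, which I take to be $-k_i(s_{-i}^n)$, depends only on the actions $s_{-i}^n$ of the other followers. This holds because in the inner loop solving $P_{-i}(0)$ (the iteration indexed by $j=i$) only followers $i'\neq i$ submit their minimizers $\hat z_{i'}^{-i,k}$; player $i$ never participates, so the trajectory $\hat z^{-i,k}$, its projection $z^{-i}$, and hence every $v_j(z_j^{-i};\theta_j)$ are functions of $s_{-i}^n$ alone. This is precisely the decoupling that makes the tax a bona fide indirect Groves payment.

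Next I would establish single fault tolerance: whatever action $s_i$ a single deviating follower takes, the leader outputs $\zeta^n(\theta)=\text{Proj}(\hat z^n,\mathcal{Z})\in\mathcal{Z}$ by construction and the tax is a finite real number, so $g(s_i,\sigma_{-i}(\theta_{-i}))\in\mathcal{Z}\times\mathcal{P}$ always, as required by Definition~\ref{def:feasible}. For asymptotic efficiency of $\{g_\zeta^n\circ\sigma^n\}$ I would invoke convergence of the dual decomposition iteration~(\ref{ahu}): under strict convexity of each $v_i(\cdot;\theta_i)$ the inner minimizer in~(\ref{ahu1}) is unique and, for a sufficiently small step size, $\hat z^n\to z^*$, the unique feasible primal optimizer. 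Since $z^*\in\mathcal{Z}$, projection gives $g_\zeta^n(\sigma^n(\theta))=\text{Proj}(\hat z^n,\mathcal{Z})\to z^*$, so $\mathrm{dist}(g_\zeta^n(\sigma^n(\theta)),\mathcal{Z})=0$ for all $n$, and continuity of the $v_i$ yields $\sum_i v_i(g_\zeta^n(\sigma^n(\theta));\theta_i)\to\sum_i v_i(z^*;\theta_i)$, the optimal value. Both requirements in the definition of asymptotic efficiency are thereby met, and Proposition~\ref{tho:general} completes the argument.

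The step I expect to be the main obstacle is asymptotic efficiency, for two reasons. First, the definition demands a single $n_0$ that works uniformly over all $\theta\in\Theta$, whereas the convergence rate of~(\ref{ahu}) a priori depends on $\theta$; securing uniformity would require either a compactness/uniform-boundedness argument on $\Theta$ or an explicit convergence-rate bound. Second, I would be careful to note that, although the marginal iterations for $P_{-i}(0)$ must converge for the limiting social choice function to coincide with the genuine VCG tax (the ``integrated market clearing prices''), their convergence is \emph{not} strictly needed for asymptotic incentive compatibility itself: Proposition~\ref{tho:general} only asks that $k_i$ be independent of $s_i^n$, which I have already established. The remaining verifications are routine.
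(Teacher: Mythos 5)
Your proposal is correct and follows essentially the same route as the paper's own proof: identify the tax as an indirect Groves payment with $k_i(s_{-i}^n)=-\sum_{j\neq i}\hat{v}_j^{-i}$ (independent of $i$'s actions since $i$ does not participate in solving $P_{-i}(0)$), note single fault tolerance via the projection in line~17, establish asymptotic efficiency from convergence of the dual decomposition iteration and continuity of the $v_i$, and conclude by Proposition~\ref{tho:general}. Your added caveat about uniformity of $n_0$ over $\Theta$ is a fair observation, but the paper's proof glosses over the same point, so it does not constitute a divergence in approach.
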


\begin{proof}
\ifdefined\SHORTVERSION
See~\cite{Tanaka:arxiv}.
\fi
\ifdefined\LONGVERSION
See Appendix~\ref{proof:prop:1}.
\fi
\end{proof}

Notice that lines~8--15 in Algorithm~\ref{distalg:reducedtax} are devoted to calculating $k_i(s^n_{-i})$ in Proposition \ref{tho:general}. This is just one example of such function (which is motivated by taxes in VCG mechanisms).  Removing the above mentioned lines from Algorithm~\ref{distalg:reducedtax} and setting some other values to $k_i(s^n_{-i})$ (e.g.,  $k_i(s^n_{-i})=0$) still attains asymptotic incentive compatibility.  To relax strict convexity assumption, one may alternatively consider the ADMM algorithm \cite{boyd2011} (distributed implementation of the augmented Lagrangian algorithm \cite{bertsekas1995}) in Algorithm~\ref{distalg:reducedtax}.
 
\begin{example}
\label{solconcept}
As mentioned earlier, indirect Groves mechanisms in general implement efficient decision rules in \emph{ex-post} Nash equilibria but not in dominant strategies.
As an example, consider $\min_{z_1,z_2} \sum_{i=1,2} v_i(z_i)$ s.t. $z_1=z_2$ where $v_1(z_1)=(z_1-1)^2$ and $v_2(z_2)=(z_2-2)^2$, and the dual decomposition algorithm (Algorithm \ref{distalg:reducedtax}) is used. Suppose that the second player chooses to act as $\hat{z}_2^k=\argmin_{z_2}(\hat{v}_2(z_2)+\lambda^{k-1}z_2)$ where $\hat{v}_2(z_2)=(z_2-3)^2$ instead of line 5 in the algorithm. 
If the first player executes the algorithm faithfully, $\hat{z}_1^k$ and $\hat{z}_2^k$ converge to $2$, and the first player's net cost converges to $v_1(2)+p_1=v_1(2)+v_2(2)=1$. However, if the first player deviates from the suggested algorithm and executes $\hat{z}_1^k=\argmin_{z_1}(\hat{v}_1(z_1)+\lambda^{k-1}z_1)$ where $\hat{v}_1(z_1)=z_1^2$ instead of line 5 in the algorithm, $\hat{z}_1^k$ and $\hat{z}_2^k$ converge to $1.5$, and the first player's net cost converges to $v_1(1.5)+p_1=v_1(1.5)+v_2(1.5)=0.5$. Hence, the first player is better off by not following the intended algorithm. This means that faithful execution is not a dominant strategy.
\end{example}

\subsection{Connection between VCG and clearing prices}
\label{secconnection}

\begin{figure}[t]
    \centering
    \includegraphics[width=\linewidth, bb=30 40 710 450]{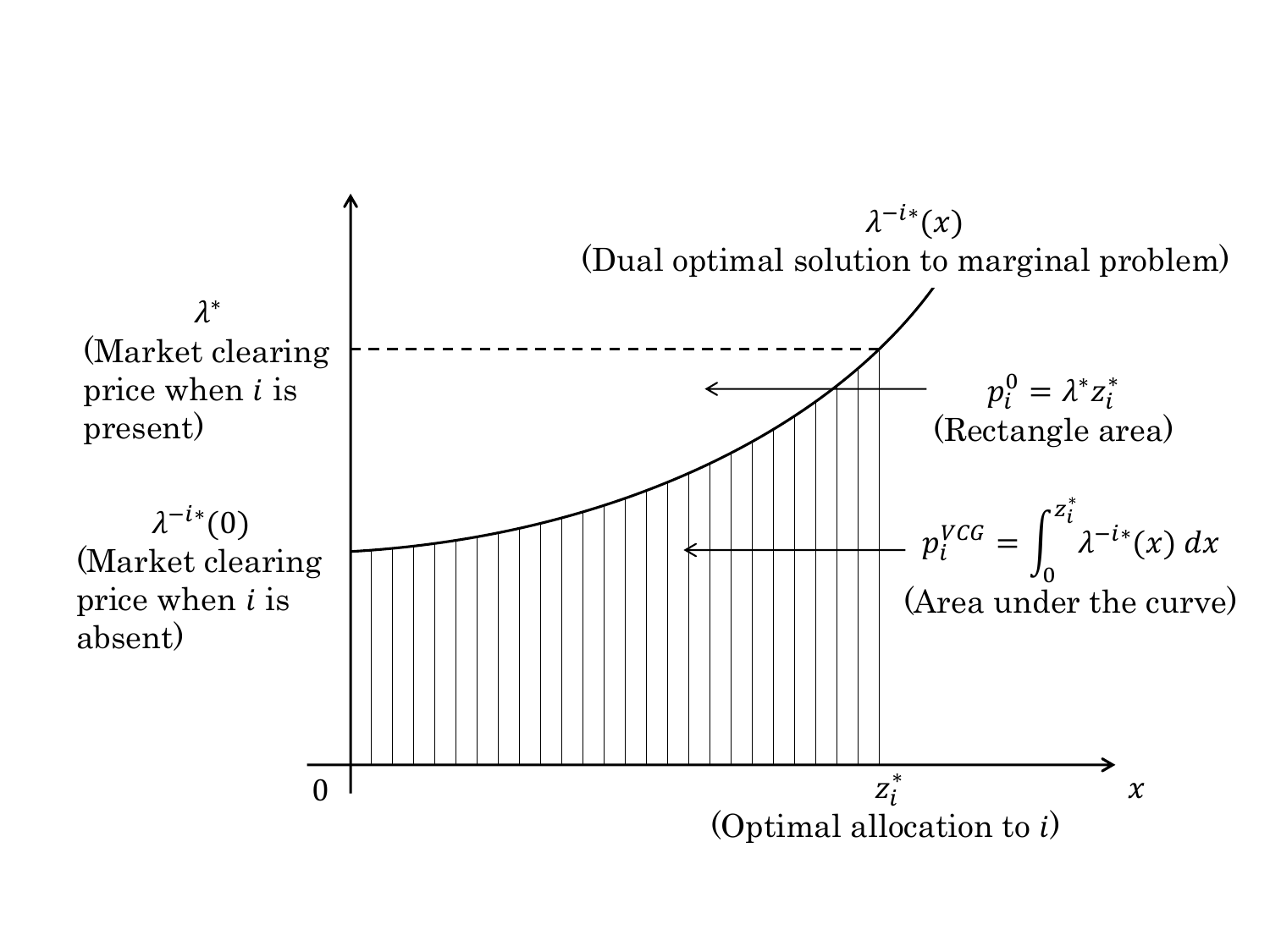}
    \caption{Graphical interpretation of the VCG taxes.}
    \label{fig:vcg}
\end{figure}

{\upd 
Invoking Theorem \ref{theoconverse}, it is now clear why the ``clearing price" mechanism considered in Section \ref{secresourceallocation} fails to be incentive compatible.
This is simply because the ``clearing price" mechanism is not in the class of indirect Groves mechanisms.
However, it can be shown that the ``clearing price" mechanism is \emph{approximately} incentive compatible under the pure competition \cite{RefWorks:238} (i.e., when individual followers have negligible market power to control market-clearing prices). We show this fact by pointing out an intimate connection between the tax rule $p_i^0=\lambda^*z_i^*$ considered in Section \ref{secresourceallocation} and the VCG-like tax in (\ref{pvcg}). For simplicity, we assume that cost functions $v_i$ are strongly convex and continuously differentiable.

To see a connection, for each $i$, consider a smooth path $x_i: [0,1]\rightarrow \mathbb{R}^{n_{z_i}}$ defined by $x_i(t)=tz_i^*$.}
Intuitively, the path $x_i$ continuously connects the follower $i$'s allocations in two distinct situations: $x_i(0)=0$ corresponds to the case where follower $i$ is absent (allocation is zero), and $x_i(1)=z_i^*$ is the optimal allocation in (\ref{originalprob}).
For every point on the path $x_i(t)$, it is possible to consider the primal-dual optimal solution $(z^{-i*}(x_i(t)), \lambda^{-i*}(x_i(t)))$ to the marginal optimization problem  (\ref{marginalprob}). In particular, $\lambda^{-i*}(x_i(t))$ shows how the market-clearing price determined by the rest of society (excluding $i$) changes if follower $i$'s allocation is fixed at different values $x_i(t)$ between $0$ and $z_i^*$. 

\begin{proposition}
\label{propmarginalvcg}
The VCG payment for the $i$-follower is obtained by integrating the market-clearing price $\lambda^{-i*}$ along the path $x_i$, i.e., $p_i^{VCG}=\int_0^{z_i^*}{\lambda^{-i*}}^{\top}(x_i)R_idx_i$.
\end{proposition}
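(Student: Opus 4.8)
The plan is to recognize $p_i^{VCG}$ as the change in the optimal value of the marginal problem $P_{-i}(x)$ as follower $i$'s fixed allocation moves from $0$ to $z_i^*$, and then to express that change as a path integral whose integrand the envelope theorem identifies with the marginal clearing price ${\lambda^{-i*}}^{\top} R_i$.

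First I would introduce the optimal value function $\phi_i(x):=\sum_{j\neq i} v_j(z_j^{-i*}(x);\theta_j)$ of $P_{-i}(x)$. By definition $\phi_i(0)=\sum_{j\neq i} v_j(z_j^{-i*}(0);\theta_j)$ is exactly the second sum in (\ref{pvcg}). The key identity to establish is $\phi_i(z_i^*)=\sum_{j\neq i} v_j(z_j^*;\theta_j)$, i.e. fixing follower $i$'s allocation at its socially optimal value recovers the full optimum for the remaining followers. This follows by comparing stationarity conditions: since $\sum_j R_j z_j^*=c$, the truncated vector $(z_j^*)_{j\neq i}$ is feasible for $P_{-i}(z_i^*)$; and the saddle-point condition $\nabla v_j(z_j^*)+R_j^{\top}\lambda^*=0$ of the full problem is precisely the stationarity condition of $P_{-i}(z_i^*)$ at $(z_j^*)_{j\neq i}$ with multiplier $\lambda^*$. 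Strong convexity makes these conditions sufficient and the minimizer unique, so $z_j^{-i*}(z_i^*)=z_j^*$ and $\lambda^{-i*}(z_i^*)=\lambda^*$. Consequently $p_i^{VCG}=\phi_i(z_i^*)-\phi_i(0)$.

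Next I would apply the fundamental theorem of calculus along the path $x_i(t)=t z_i^*$, $t\in[0,1]$, obtaining
\[
p_i^{VCG}=\phi_i(x_i(1))-\phi_i(x_i(0))=\int_0^1 \nabla\phi_i(x_i(t))^{\top}\dot{x}_i(t)\,dt .
\]
It then remains to compute $\nabla\phi_i$. Writing the constraint of $P_{-i}(x)$ as $\sum_{j\neq i}R_jz_j=c-R_ix$, the standard sensitivity result for parametric convex programs states that the gradient of the optimal value with respect to the constraint right-hand side equals the negative of the optimal multiplier; composing with $\partial(c-R_ix)/\partial x=-R_i$ yields $\nabla\phi_i(x)^{\top}={\lambda^{-i*}}^{\top}(x)R_i$. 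Substituting this into the integral above and rewriting it as the line integral along $x_i$ gives $p_i^{VCG}=\int_0^{z_i^*}{\lambda^{-i*}}^{\top}(x_i)R_i\,dx_i$, as claimed.

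The main obstacle is the differentiability of $\phi_i$ and the rigorous justification of the sensitivity formula $\nabla\phi_i(x)^{\top}={\lambda^{-i*}}^{\top}(x)R_i$ along the whole path. This is exactly where the hypotheses stated just before the proposition do the work: strong convexity guarantees a unique minimizer and, together with the affine (hence regular) constraints, a unique multiplier, while the implicit function theorem applied to the KKT system shows that $x\mapsto(z^{-i*}(x),\lambda^{-i*}(x))$ is continuously differentiable, so that $\phi_i$ is $C^1$ and the envelope theorem is applicable at every $x_i(t)$. Everything else — feasibility of $(z_j^*)_{j\neq i}$, the matching of stationarity conditions, and the change of variables in the integral — is routine.
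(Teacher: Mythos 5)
Your proposal is correct and is essentially the paper's own proof read in the opposite direction: the paper starts from the integral and, using the differentiated constraint $R_i\dot{x}_i+\sum_{j\neq i}R_j\dot{z}_j^{-i*}=0$, the stationarity condition $\nabla_{z_j}v_j+R_j^{\top}\lambda^{-i*}=0$, and the fundamental theorem of calculus, arrives at $\sum_{j\neq i}[v_j(z_j^{-i*}(z_i^*))-v_j(z_j^{-i*}(0))]=p_i^{VCG}$, which is exactly your envelope-theorem computation for $\nabla\phi_i$ unpacked term by term. Your two additions --- the explicit verification that $z^{-i*}(z_i^*)=z^*$ (which the paper uses silently in its last line) and the remark on differentiability of $x\mapsto(z^{-i*}(x),\lambda^{-i*}(x))$ --- are welcome but do not change the argument.
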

\begin{proof}
For notational ease, dependency of $v_i$ on $\theta_i$ is suppressed.
Notice that
\begin{align}
\int_0^{z_i^*}&{\lambda^{-i*}}^{\top}(x_i)R_idx_i =\int_0^1 {\lambda^{-i*}}^{\top}(x_i(t))R_i\frac{dx_i(t)}{dt}dt \nonumber \\
&=-\sum_{j\neq i}\int_0^1  {\lambda^{-i*}}^{\top}(x_i(t))R_j\frac{dz_j^{-i*}(x_i(t))}{dt}dt \label{eq1} \\
&=\sum_{j\neq i} \int_0^1 \nabla_{z_j} v_j(z_j^{-i*}(x_i(t))) \frac{dz_j^{-i*}(x_i(t))}{dt}dt \label{eq2} \\
&=\sum_{j\neq i} \left[v_j(z_j^{-i*}(z_i^*))-v_j(z_j^{-i*}(0)) \right] \label{eq4}\\
&=\sum_{j\neq i} v_j(z_j^*)- \sum_{j\neq i}v_j(z_j^{-i*})=p_i^{VCG} \nonumber
\end{align}
The identity $R_i \frac{dx_i(t)}{dt}+\sum_{j\neq i}^N R_j\frac{dz_j^{-i*}(x_i(t))}{dt}=0$ is used in (\ref{eq1}) 
and (\ref{eq2}) is from the optimality condition
$
\nabla_{z_j} v(z_j^{-i*}(x))+{\lambda^{-i*}}^{\top}(x)R_j=0,\forall j\neq i.
$
The fundamental theorem of calculus is used in (\ref{eq4}). 
\end{proof}
A pictorial interpretation of $p_i^{VCG}$ is shown in Fig.  \ref{fig:vcg} (for simplicity, assume $R_i=1$ for every $i$ in the  resource allocation problem (\ref{originalprob}) in Section \ref{secresourceallocation}). 
{\upd
This figure shows that $p_i^{VCG} \approx p_i^0$ when the function $\lambda^{-i*}(x_i(\cdot))$ is nearly constant. 
In other words, when the individual followers have negligible market power, the ``clearing price" mechanism can be identified with the VCG mechanism.
}

\section{Faithful implementation of average consensus}
\label{secaverage}

{\upd Consider a multi-robot rendezvous problem in which $N$ robots want to meet in a single position (i.e., achieve a consensus in space). Considering that each robot utilizes fuel/battery to reach the rendezvous point, the social planner may want them to end up at a point that minimizes the sum of their distances from starting points, i.e., the average of the initial positions denoted by $\theta_i$. Here, we design a mechanism that can be used by the social planner (leader) to coordinate robots (followers) to faithfully implement a distributed algorithm, which lead them to the rendezvous point. 
In particular, we consider an iterative \emph{average consensus} algorithm~\cite{ren2005survey} in which robots are required to communicate with neighboring robots in each iteration.
Notice that an incentive design is needed in such situations, since otherwise a particular robot may choose to stand stationary, hoping that all other robots will move towards it thereby, not using any fuel. Due to the nature of the average consensus algorithms, the robots do not achieve exact consensus in their position in finite time. 
Moreover, if we consider the position $\zeta^n(\theta)$ of robots after $n$ iterations of communication as  the social decision, the decision rule $\zeta^n(\theta)$ with finite $n$ may not yield a feasible solution.
Hence the mechanism cannot be made single fault tolerant, and Proposition \ref{tho:general} is not applicable.
Instead, we use the result of Proposition~\ref{Prop:Infeasible}.
Note that the existence of a leader does not mean that communications are required between the leader and the followers at every iteration, and hence does not ruin the advantage of distributed consensus algorithms.
Indeed, in our design, a consensus is formed solely by local communications among robots, and the leader plays its role only at the beginning (to announce the tax rule and the algorithm to be implemented) and at the end (to compute taxes based on the information about the final positions of robots). 

In what follows, we formally introduce a mechanism that asymptotically implements the average consensus algorithm.}
Let an undirected graph $\mathcal{G}=(\{1,\dots,N\},\mathcal{E})$, with vertex set $\{1,\dots,N\}$ and edge set $\mathcal{E}$, be given to illustrate the communication links between the agents. 
Following~\cite{rabbat2005generalized}, we can achieve average consensus by solving 
\begin{subequations} \label{optimization:1}
\begin{align}
\min_{\x\in\mathbb{R}^N} & \sum_{i=1}^N (\x_i-\theta_i)^2,\\[-.3em] 
\mathrm{s.t.} \hspace{.05in} & \x_i=\x_j, \forall (i,j)\in\mathcal{E}, \label{optimization:1_2}
\end{align}
\end{subequations}
where $\x_i\in\mathbb{R}$ is the decision variable of follower~$i$ and $\theta_i\in\Theta_i\subseteq\mathbb{R}$ is its type. Let us define the incidence matrix $B\in\{-1,0,+1\}^{N\times |\mathcal{E}|}$ so that $b_{ij}=1$ if $e_j\in\mathcal{E}$ leaves vertex $i$, $b_{ij}=-1$ if $e_j\in\mathcal{E}$ enters vertex $i$, and $b_{ij}=0$ otherwise (assignments of directions to edges are arbitrary).

\begin{proposition}
\label{prop:consensus} Let $\mathcal{G}$ be a tree. The sequence of mechanisms $\{{M}^n\}_{n\in\mathbb{N}}$ provided in Algorithm~\ref{alg:3} is asymptotically incentive compatible.
\end{proposition}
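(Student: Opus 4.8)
The plan is to invoke Proposition~\ref{Prop:Infeasible} rather than Proposition~\ref{tho:general}. As emphasized in the motivating discussion of this section, the decision $\zeta^n(\theta)$ is a vector of physical robot positions that the leader cannot project onto the consensus subspace $\mathcal{Z}=\{z\in\mathbb{R}^N : z_1=\cdots=z_N\}$; single fault tolerance is therefore unavailable (a robot could stand still and let the others converge to it), and the infeasibility-penalizing tax of Proposition~\ref{Prop:Infeasible} is exactly the tool designed for this situation. Two hypotheses of that proposition are immediate: $\mathcal{Z}$ is a closed subspace, and each cost $v_i(z;\theta_i)=(z_i-\theta_i)^2$ is continuous in $z$. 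What remains is to verify (i)~asymptotic efficiency of $\{g_\zeta^n\circ\sigma^n\}$, (ii)~a vanishing uniform bound $\beta(n)$ on $\sup_{\theta\in\Theta}\mathrm{dist}(g_\zeta^n(\sigma^n(\theta)),\mathcal{Z})$, and to confirm that the payment rule of Algorithm~\ref{alg:3} is the Groves-plus-penalty rule prescribed in Proposition~\ref{Prop:Infeasible}.

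First I would identify the limit of the faithful iteration. The program (\ref{optimization:1}) is strictly convex with the linear constraint $B^\top z=0$, so its unique minimizer is the projection of $\theta$ onto $\mathcal{Z}$, i.e.\ the average $\bar\theta\,\mathbf{1}$ with $\bar\theta=\frac1N\sum_i\theta_i$. I would then show that the distributed recursion embedded in Algorithm~\ref{alg:3} (the dual-decomposition/consensus update driven by the edge multipliers associated with the incidence matrix $B$) converges to $\bar\theta\,\mathbf{1}$ for a sufficiently small step size, so that $\lim_{n\to\infty}\sum_j v_j(g_{\zeta_j}^n\circ\sigma^n(\theta);\theta_j)=\sum_j v_j(\bar\theta;\theta_j)$; combined with $\mathrm{dist}(\zeta^n(\theta),\mathcal{Z})\to 0$, this yields asymptotic efficiency, in complete analogy with the third step of the proof of Proposition~\ref{prop:1}.

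The tree hypothesis enters precisely in producing the uniform rate $\beta(n)$. For a tree one has $|\mathcal{E}|=N-1$ and $B$ has full column rank, so $B^\top B\succ 0$; this makes the dual of (\ref{optimization:1}) strongly concave and forces the dual ascent (equivalently, the consensus recursion) to contract geometrically at some rate $\rho<1$ determined by the spectrum of $B$ (equivalently, by the algebraic connectivity of $\mathcal{G}$). I would use this to bound the residual consensus error after $n$ steps by $\beta(n)=c\,\rho^n$, where $c$ depends only on the initial data and the diameter of $\Theta$; assuming $\Theta$ is bounded, the supremum over $\theta\in\Theta$ keeps $\beta(n)$ finite with $\beta(n)\to 0$. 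With (i), (ii), and the closedness/continuity already established, and with the tax of Algorithm~\ref{alg:3} being the rule of Proposition~\ref{Prop:Infeasible} (here with $k_i\equiv 0$ in the feasible branch and penalty $C_i$ otherwise), the conclusion follows directly by applying that proposition.

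I expect the main obstacle to be step (ii): securing a convergence-rate bound that is simultaneously explicit, geometric, and uniform over $\theta\in\Theta$. This is exactly where full column rank of the tree incidence matrix is indispensable, since on a graph with cycles $B$ has a nontrivial kernel, the edge multipliers are no longer unique, the dual loses strong concavity, and the clean geometric bound defining $\beta(n)$ need not survive. A secondary point to check is that being merely within $\beta(n)$ of $\mathcal{Z}$, rather than exactly feasible as single fault tolerance would give, still forces the social cost toward the efficient value; this is handled by continuity of the $v_i$ together with closedness of $\mathcal{Z}$, and is already internalized in the statement of Proposition~\ref{Prop:Infeasible}.
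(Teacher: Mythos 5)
Your proposal is correct and follows essentially the same route as the paper: both apply Proposition~\ref{Prop:Infeasible}, establish asymptotic efficiency by showing the consensus iteration preserves the mean and converges to $\bar\theta\,\mathbf{1}$, and obtain the geometric rate $\beta(n)$ from the fact that the tree incidence matrix has full column rank, so that $B^\top B\succ 0$ and the constraint violation $B^\top z^k$ contracts by the factor $1-\rho$ at each step. The only cosmetic difference is that you phrase the contraction in terms of strong concavity of the dual, whereas the paper tracks the primal residual $\|B^\top z^k\|_2$ directly through the iteration $z^{k+1}=(I-\alpha BB^\top)z^k$; the underlying spectral argument is identical.
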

\ifdefined\SHORTVERSION
\begin{proof}
The proof follows from applying Proposition~\ref{Prop:Infeasible}. For a detailed derivation of the rate $\{\beta(n)\}_{n\in\mathbb{N}}$ see~\cite{Tanaka:arxiv}.
\end{proof}
\fi
\ifdefined\LONGVERSION
\begin{proof}
Proof can be found in Appendix \ref{apendconsensus}.
\end{proof}
\fi

\begin{algorithm}[t]
\small
\caption{\label{alg:3} Distributed mechanism ${M}^n=(\sigma^n,\mathcal{S}^n,g^n)$ for asymptotically implementing the average consensus.}
\begin{algorithmic}[1]
\OUTPUT Consensus decision $\zeta^n(\theta)$ and tax assignment $\pi^n(\theta)$
\STATE (L) The leader announces the following algorithm;
\STATE //  Solve optimization problem~\eqref{optimization:1};
\STATE (L) Set $\alpha\in(0,1/\lambda_{\max}(B^\top B))$  and broadcast it;
\STATE (L) Set $\rho=\lambda_{\min}(B^\top B)/ \lambda_{\max}(B^\top B)$;
\STATE (F) Initialize $z_i^0=\theta_i$ for each $1\leq i\leq N$;
\FOR{$k=1,\dots,n$}
\STATE (F) Calculate $z_i^k=z_i^{k-1}+\alpha \sum_{j\in\mathcal{N}_i} (z_j^{k-1} -z_i^{k-1})$, where $\mathcal{N}_i$ is the neighbors of agent~$i$ in~$\mathcal{G}$, and transmit to neighbors;
\ENDFOR
\STATE (F) Transmit $z_i^n$ and $\hat{v}_i=(z_i^n-\theta_i)^2$ to the leader;
\STATE // Compute social outcomes;
\STATE (L) Determine and broadcast $\zeta^n(\theta)=z^n$;
\IF{$\mathrm{dist}(z^n,\mathcal{Z})\leq(1-\rho)^{n} \|B(B^{\top}B)^{-1}\|_{2}\sup_{q\in\Theta}\|B^{\top}q\|_2$}
\STATE (L) Determine $\pi_i^n(\theta)=\sum_{j\neq i}\hat{v}_j$; 
\ELSE
\STATE (L) Determine $\pi_i^n(\theta)=\sup_{\theta\in\Theta}\sup_{z\in \mathcal{Z}'}\sum_{j} (z_j-\theta_j)^2$, where $\mathcal{Z}'=\mathcal{Y}^N$ with $\mathcal{Y}=\{\sum_{j}\alpha_j\theta_j: \sum_{j} \alpha_j=1,\alpha_j\geq 0,\theta_j\in\Theta_j,\forall j\}. $
\ENDIF
\end{algorithmic}
\end{algorithm}


\section{Discussion and Conclusions}
\label{secdiscussion}
We have discussed a general indirect mechanism design framework for faithful distributed algorithm implementations. As examples, we have considered dual decomposition and average consensus algorithms. 

{\upd
The framework of this paper is directly applicable to many \emph{distributed control problems}. Although the issue of incentive is usually neglected in the control theory literature, it is an important challenge that always arises when distributed agents are strategic. For example, a distributed control algorithm proposed in \cite{RefWorks:239} assumes that agents are faithful to the algorithm. However, this requirement can be removed by introducing the tax mechanism in Algorithm~\ref{distalg:reducedtax}.

In the future, we may also consider the faithful implementations of \emph{distributed model predictive control} (MPC) algorithms~\cite{camponogara2002distributed}. 
Distributed MPC is expected to be a powerful tool in large-scale social engineering problems (e.g., operations of power systems \cite{venkat2008} and transportation systems \cite{negenborn2008multi}).
Faithful implementations of distributed MPC requires online (real-time) mechanisms. Several appropriate modifications need to be made to the current framework (e.g., replacement of the solution concept from Nash equilibrium to Markov perfect equilibrium). Further study will be required in this research direction. 
We believe this is a great opportunity for economic theory (i.e., mechanism design) and engineering (i.e., control) to merge in order to tackle challenging problems in the society.}



\appendix

\subsection{Proof of Theorem \ref{theoconverse}}
\label{apendnec}
\underline{Case 1}:  We will first show that the tax rule must be in the form of (\ref{grovestaxequation}) when $s_i \in \sigma_i(\Theta_i)$. Proof is by contradiction.
Suppose  that 
$ g_{\pi_i}(s_i, \sigma_{-i}(\theta_{-i})) =\sum_{j \neq i} v_j (g_\zeta(s_i, \sigma_{-i}(\theta_{-i}));\theta_j) + k_i(s_i, \sigma_{-i}(\theta_{-i}))  $
where there exist $s_i, s_i' \in \sigma_i(\Theta_i)$ such that
\begin{equation}
\label{kineq}
k_i(s_i, \sigma_{-i}(\theta_{-i}))\neq k_i(s_i', \sigma_{-i}(\theta_{-i})).
\end{equation}

(Step 1): Suppose $g_\zeta(s_i, \sigma_{-i}(\theta_{-i}))=g_\zeta(s_i', \sigma_{-i}(\theta_{-i}))$. Let $\theta_i,\theta_i'\in \Theta_i$ satisfy  $s_i=\sigma_i(\theta_i)$ and $s_i'=\sigma_i(\theta_i')$. If player $i$'s true parameter is $\theta_i$, {\upd acting $s_i$ minimizes his net cost since $M$ is incentive compatible.} Thus,
\begin{align*}
& v_i(g_\zeta(s_i, \sigma_{-i}(\theta_{-i})); \theta_i)+g_{\pi_i}(s_i, \sigma_{-i}(\theta_{-i})) \\
& \leq v_i(g_\zeta(s_i', \sigma_{-i}(\theta_{-i})); \theta_i)+g_{\pi_i}(s_i', \sigma_{-i}(\theta_{-i})) 
\end{align*}
or
$g_{\pi_i}(s_i, \sigma_{-i}(\theta_{-i})) \leq g_{\pi_i}(s_i', \sigma_{-i}(\theta_{-i}))$.
On the other hand, if player $i$'s true parameter is $\theta_i'$, it must be that 
$
g_{\pi_i}(s_i, \sigma_{-i}(\theta_{-i})) \geq g_{\pi_i}(s_i', \sigma_{-i}(\theta_{-i})).
$
Since  both cases could occur, the only possibility is 
\begin{equation}
\label{eqgpis}
g_{\pi_i}(s_i, \sigma_{-i}(\theta_{-i})) = g_{\pi_i}(s_i', \sigma_{-i}(\theta_{-i})).
\end{equation}
The left hand side of equation (\ref{eqgpis}) can be written as $\sum_{j\neq i} v_j(g_\zeta(s_i, \sigma_{-i}(\theta_{-i}));\theta_j) + k_i(s_i, \sigma_{-i}(\theta_{-i}))$, while the right hand side is $\sum_{j\neq i} v_j(g_\zeta(s_i', \sigma_{-i}(\theta_{-i}));\theta_j) + k_i(s_i', \sigma_{-i}(\theta_{-i}))$.
Since we are assuming $g_\zeta(s_i, \sigma_{-i}(\theta_{-i}))=g_\zeta(s_i', \sigma_{-i}(\theta_{-i}))$, equation (\ref{eqgpis}) {\upd implies}
\[ k_i(s_i, \sigma_{-i}(\theta_{-i})) = k_i(s_i', \sigma_{-i}(\theta_{-i})).\]
This is a contradiction to (\ref{kineq}).

(Step 2): Now we can assume $g_\zeta(s_i, \sigma_{-i}(\theta_{-i})) \neq g_\zeta(s_i', \sigma_{-i}(\theta_{-i}))$. Without loss of generality, we can also assume $ k_i(s_i, \sigma_{-i}(\theta_{-i})) < k_i(s_i', \sigma_{-i}(\theta_{-i})) $. Then,
\begin{equation}
\label{ineqeps}
k_i(s_i, \sigma_{-i}(\theta_{-i})) < k_i(s_i', \sigma_{-i}(\theta_{-i}))-\epsilon
\end{equation}
for some $\epsilon>0$.
For every $j\neq i,  1 \leq j \leq  N$, assume $v_j(\cdot; \theta_j)$ are {\upd  quadratic functions}.
Since $\Theta_i$ exhausts the space of all quadratic functions,  there exists $\theta_i''\in \Theta_i$ such that
\begin{align}
&v_i(z;\theta_i'')=-\sum_{j\neq i} v_j(z;\theta_j) \nonumber \\
&+\epsilon \left( \frac{\|z-g_\zeta(s_i', \sigma_{-i}(\theta_{-i}))\|^2}{\|g_\zeta(s_i, \sigma_{-i}(\theta_{-i}))-g_\zeta(s_i', \sigma_{-i}(\theta_{-i})) \|^2}-1\right).\label{viconst}
\end{align}
By incentive compatibility, player $i$ whose true type is $\theta_i''$ {\upd attains smaller net cost} by acting $\sigma_i(\theta_i'')$ than acting $s_i$. (Here, $\sigma_i(\theta_i'')$ and $s_i$ may or may not be equal.)
\begin{align}
&v_i(g_\zeta(\sigma_i(\theta_i''),\sigma_{-i}(\theta_{-i}));\theta_i'')+g_{\pi_i}(\sigma_i(\theta_i''),\sigma_{-i}(\theta_{-i}))  \nonumber \\
& \leq  v_i(g_\zeta(s_i,\sigma_{-i}(\theta_{-i}));\theta_i'')+g_{\pi_i}(s_i,\sigma_{-i}(\theta_{-i})). \label{eqstar2}
\end{align}
By efficiency, {\upd the outcome $z$ of the mechanism minimizes}
\begin{align*}
&v_i(z;\theta_i'')+\sum_{j\neq i} v_j(z;\theta_j) \\
&=\epsilon \left( \frac{\|z-g_\zeta(s_i', \sigma_{-i}(\theta_{-i}))\|^2}{\|g_\zeta(s_i, \sigma_{-i}(\theta_{-i}))-g_\zeta(s_i', \sigma_{-i}(\theta_{-i})) \|^2}-1\right).
\end{align*}
Thus, {\upd the decision by the mechanism is}
\begin{equation}
\label{effchoice}
g_\zeta(\sigma_i(\theta_i''),\sigma_{-i}(\theta_{-i}))=g_\zeta(s_i',\sigma_{-i}(\theta_{-i})).
\end{equation}
Substituting (\ref{effchoice}) into (\ref{eqstar2}) gives
\begin{align*}
&v_i(g_\zeta(s_i', \sigma_{-i}(\theta_{-i}));\theta_i'')+g_{\pi_i}(\sigma_i(\theta_i''),\sigma_{-i}(\theta_{-i})) \\
 &\leq 
v_i(g_\zeta(s_i, \sigma_{-i}(\theta_{-i}));\theta_i'')+g_{\pi_i}(s_i,\sigma_{-i}(\theta_{-i})).
\end{align*}
Rewriting the above relation using (\ref{viconst}),
\begin{align*}
&-\sum_{j\neq i} v_j(g_\zeta(s_i',\sigma_{-i}(\theta_{-i}));\theta_j)\!-\!\epsilon 
\! \\
&+ \sum_{j\neq i} v_j(g_\zeta(\sigma_i(\theta_i''),\sigma_{-i}(\theta_{-i}));\theta_j)\!+\!k_i(\sigma_i(\theta_i''),\sigma_{-i}(\theta_{-i})) \\
&\leq 
-\sum_{j\neq i} v_j(g_\zeta(s_i,\sigma_{-i}(\theta_{-i}));\theta_j) 
\\
&+ \sum_{j\neq i} v_j(g_\zeta(s_i,\sigma_{-i}(\theta_{-i}));\theta_j)+k_i(s_i,\sigma_{-i}(\theta_{-i})).
\end{align*}
Using (\ref{effchoice}) again, this can be simplified to 
\begin{equation}
\label{eqprA}
k_i(\sigma_i(\theta_i''),\sigma_{-i}(\theta_{-i}))-\epsilon \leq k_i(s_i,\sigma_{-i}(\theta_{-i})).
\end{equation}
Since (\ref{effchoice}), by applying the argument in Case 1, it must follow that
\begin{equation}
\label{eqprB}
k_i(\sigma_i(\theta_i''),\sigma_{-i}(\theta_{-i})) = k_i(s_i',\sigma_{-i}(\theta_{-i})).
\end{equation}
Substituting (\ref{eqprB}) into (\ref{eqprA}), we have $k_i(s_i',\sigma_{-i}(\theta_{-i})) -\epsilon \leq k_i(s_i,\sigma_{-i}(\theta_{-i}))$. This is a contradiction to (\ref{ineqeps}). Hence, we have shown that the tax rule must be in the form of (\ref{grovestaxequation}) when $s_i \in \sigma_i(\Theta_i)$.

\underline{Case 2}: Now we need to show that the inequality (\ref{grovestaxineq}) must be satisfied for every $s_i\in \mathcal{S}_i \setminus \sigma(\Theta_i)$, where $k_i$ is the same function as in Case 1. Suppose, on the contrary, that there exist $s_i\in \mathcal{S}_i \setminus \sigma(\Theta_i)$ and $\theta_{-i}\in \Theta_{-i}$ such that
\begin{align}
&g_{\pi_i}(s_i,\sigma_{-i}(\theta_{-i})) + \epsilon \nonumber  \\
& <  \sum_{j\neq i}  v_j(g_\zeta(s_i,\sigma_{-i}(\theta_{-i}));\theta_j)+k_i(\sigma_{-i}(\theta_{-i})) \label{case2eps}
\end{align}
with $\epsilon>0$. By incentive compatibility, the $i$-th follower
{\upd can minimize the net cost by acting $\sigma_i(\theta_i)$}:
\begin{align*}
v_i(g_\zeta(\sigma_i(\theta_i),\sigma_{-i}(\theta_{-i}));\theta_i)+g_{\pi_i}(\sigma_i(\theta_i),\sigma_{-i}(\theta_{-i}))  \\
\leq v_i(g_\zeta(s_i,\sigma_{-i}(\theta_{-i}));\theta_i)+g_{\pi_i}(s_i,\sigma_{-i}(\theta_{-i})). 
\end{align*}
By the discussion in Case 1, the equality $g_{\pi_i}(\sigma_i(\theta_i),\sigma_{-i}(\theta_{-i}))=  \sum_{j\neq i}  v_j(g_\zeta(\sigma_i(\theta_i),\sigma_{-i}(\theta_{-i}));\theta_j)+k_i(\sigma_{-i}(\theta_{-i})) $ is applicable on the left hand side. Also, by substituting (\ref{case2eps}) into the right hand side, we obtain
\[\sum_{i=1}^N\! v_i(g_\zeta (\sigma_i(\theta_i),\sigma_{-\!i}(\theta_{-\!i}));\!\theta_i) \!<\!\! \sum_{i=1}^N\! v_i(g_\zeta (s_i,\sigma_{-\!i}(\theta_{-\!i}));\!\theta_i)-\epsilon.\]
Now, consider an extreme situation in which all cost functions are constant, i.e., $v_i(\cdot;\theta_i)=c$ for every $i\in\{1,\cdots, N\}$. Then, the last inequality leads to $\epsilon < 0$, a contradiction.

\ifdefined\LONGVERSION
\subsection{Proof of Proposition \ref{Prop:Infeasible}}
\label{apendprop2}
To prove this proposition, assume that, $\forall n\in\mathbb{N}$, agent $i$ follows $s_i^n\in\mathcal{S}_i^n$ and the rest of the agents follow $\sigma_{-i}^n(\theta_{-i})$.
Let us define sets 
\begin{align*}
\mathcal{M}&=\{m\in\mathbb{N}\,|\,\mathrm{dist}(g_\zeta^{n_m}(s_{i}^{n_m},\sigma_{-i}^{n_m}(\theta_{-i})),\mathcal{Z})> \beta(n_m)\},\\
\mathcal{L}&=\{\ell\in\mathbb{N}\,|\,\mathrm{dist}(g_\zeta^{n_\ell}(s_{i}^{n_\ell},\sigma_{-i}^{n_\ell}(\theta_{-i})),\mathcal{Z})\leq \beta(n_\ell)\}.
\end{align*}
Now, we prove the following two claims.

\underline{Claim 1}: If $\mathcal{M}\neq \emptyset$, $\forall\epsilon>0$, $\exists\bar{m}\in\mathcal{M}$ such that 
\begin{align*} 
&v_i(g_\zeta^{n_m}(\sigma_{i}^{n_m}(\theta_i),\sigma_{-i}^{n_m}(\theta_{-i}));\theta_i)+ g_{\pi_i}^{n_m}(\sigma_i^{n_m}(\theta_i),\sigma_{-i}^{n_m} (\theta_{-i}))\nonumber\\ &\leq v_i(g_\zeta^{n_m}(s_{i}^{n_m},\sigma_{-i}^{n_m}(\theta_{-i}));\theta_i) + g_{\pi_i}^{n_m}(s_i^{n_m},\sigma_{-i}^{n_m} (\theta_{-i}))+\epsilon,
\end{align*}
for all $m\in\mathcal{M}$ such that $m\geq \bar{m}$.

To prove this claim, note that, $\forall m\in\mathcal{M}$, we have
\begin{align}
v_i(&g_\zeta^{n_m}(s_{i}^{n_m},\sigma_{-i}^{n_m}(\theta_{-i}));\theta_i)+g_{\pi_i}^{n_{m}}(s_i^{n_{m}},\sigma_{-i}^{n_{m}} (\theta_{-i})) \nonumber\\
=&v_i(g_\zeta^{n_m}(s_{i}^{n_m},\sigma_{-i}^{n_m}(\theta_{-i}));\theta_i)\nonumber\\
&-\inf_{\theta_i\in\Theta_i}\inf_{z\in \mathcal{Z}'} v_i(z;\theta_i) + \sup_{\theta\in\Theta}\sup_{z\in \mathcal{Z}'}\sum_{j=1}^N v_j(z;\theta_j)\nonumber\\
\geq& \sup_{\theta\in\Theta}\sup_{z\in \mathcal{Z}'}\sum_{j=1}^N v_j(z;\theta_j).
\label{inequality:aux:1}
\end{align}
Now, because $g_\zeta^{n_m}(\sigma_{i}^{n_m}(\theta_i),\sigma_{-i}^{n_m}(\theta_{-i}))\in\mathcal{Z}'$, we get
\begin{align*}
\sup_{\theta\in\Theta}\sup_{z\in \mathcal{Z}'}\sum_{j=1}^N v_j(z;\theta_j)\geq \sum_{j=1}^N v_j(g_\zeta^{n_m}(\sigma_i^{n_m}(\theta_i),\sigma_{-i}^{n_m}(\theta_{-i}));\theta_j),
\end{align*}
which, in combination with~\eqref{inequality:aux:1}, gives
\begin{align*}
&v_i(g_\zeta^ {n_m}(\sigma_{i}^{n_m}(\theta_{i}),\sigma_{-i}^{n_m}(\theta_{-i}));\theta_i)+ g_{\pi_i}^{n_{m}}(\sigma_i^{n_{m}}(\theta_i),\sigma_{-i}^{n_{m}} (\theta_{-i})) \\
&\leq v_i(g_\zeta^{n_m}(s_{i}^{n_m},\sigma_{-i}^{n_m}(\theta_{-i}));\theta_i)+g_{\pi_i}^{n_{m}}(s_i^{n_{m}},\sigma_{-i}^{n_{m}} (\theta_{-i})).
\end{align*}
This proves Claim~1 by setting $\bar{m}=\min_{m\in\mathcal{M}}m$ (which is well-defined as $\mathcal{M}\neq \emptyset$).

\underline{Claim~2}: If $\mathcal{L}\neq \emptyset$ and $|\mathcal{L}|=\infty$, $\forall\epsilon>0$, $\exists\bar{\ell}\in\mathcal{L}$ such that 
\begin{align*} 
&v_i(g_\zeta^{n_\ell}(\sigma_{i}^{n_\ell}(\theta_i),\sigma_{-i}^{n_\ell}(\theta_{-i}));\theta_i)+ g_{\pi_i}^{n_\ell}(\sigma_i^{n_\ell}(\theta_i),\sigma_{-i}^{n_\ell} (\theta_{-i}))\nonumber\\ &\leq v_i(g_\zeta^{n_\ell}(s_{i}^{n_\ell},\sigma_{-i}^{n_\ell}(\theta_{-i}));\theta_i) + g_{\pi_i}^{n_\ell}(s_i^{n_\ell},\sigma_{-i}^{n_\ell} (\theta_{-i}))+\epsilon,
\end{align*}
for all $\ell\in\mathcal{L}$ such that $\ell\geq \bar{\ell}$.

Let us define 
$$
\hat{z}^{n_\ell}\in\argmin_{z\in\mathcal{Z}} \; \|g_\zeta^{n_\ell}  (s_{i}^{n_\ell},\sigma_{-i}^{n_\ell}(\theta_{-i}))-z\|_2,\;\; \forall \ell\in\mathcal{L}.
$$
Let us show $\argmin_{z\in\mathcal{Z}} \|g_\zeta^{n_\ell}  (s_{i}^{n_\ell},\sigma_{-i}^{n_\ell}(\theta_{-i}))-z\|_2\neq \emptyset$. Fix an arbitrary $z_0\in\mathcal{Z}$ and define $\mathcal{B}=\{z\in\mathbb{R}^{n_z}|\|z-z_0\|_2\leq 2 \|g_\zeta^{n_\ell} (s_{i}^{n_\ell},\sigma_{-i}^{n_\ell}(\theta_{-i}))-z_0\|_2\}$. For all $z\in\mathcal{Z}\setminus \mathcal{B}$, we have
\begin{align*}
\|g_\zeta^{n_\ell} (s_{i}^{n_\ell},\sigma_{-i}^{n_\ell}(\theta_{-i}))-z\|_2
\geq &\|z-z_0\|_2\nonumber\\
&-\|g_\zeta^{n_\ell} (s_{i}^{n_\ell},\sigma_{-i}^{n_\ell}(\theta_{-i}))-z_0\|_2\nonumber\\
>&\|g_\zeta^{n_\ell} (s_{i}^{n_\ell},\sigma_{-i}^{n_\ell}(\theta_{-i}))-z_0\|_2,
\end{align*}
where the first and the second inequalities, respectively, follow from the triangular inequality and that $z\in\mathcal{Z}\setminus \mathcal{B}$. Hence,
\begin{align}
\inf_{z\in\mathcal{Z}} \|g_\zeta^{n_\ell} & (s_{i}^{n_\ell},\sigma_{-i}^{n_\ell}(\theta_{-i}))-z\|_2
\nonumber\\&=\inf_{z\in\mathcal{Z}\cap\mathcal{B}} \|g_\zeta^{n_\ell}  (s_{i}^{n_\ell},\sigma_{-i}^{n_\ell}(\theta_{-i}))-z\|_2, \label{eqnn:infimum}
\end{align}
because, as we showed above, any $z\notin\mathcal{Z}\cap\mathcal{B}$ results in a strictly larger distance than $z_0\in\mathcal{Z}\cap\mathcal{B}$. Note that $\mathcal{Z}\cap \mathcal{B}$ is compact because it is bounded (subset of bounded set $\mathcal{B}$) and closed (intersection of two closed sets). Theorem~4.16 in~\cite[p.\,89]{rudin1976principles} shows $\exists z'\in \mathcal{Z}\cap\mathcal{B}$ that achieves the infimum on the right hand side of~\eqref{eqnn:infimum} and, thus, the infimum on the left hand side of~\eqref{eqnn:infimum}. Therefore, $z'\in\argmin_{z\in\mathcal{Z}} \|g_\zeta^{n_\ell} (s_{i}^{n_\ell},\sigma_{-i}^{n_\ell}(\theta_{-i}))-z\|_2$ because $z'\in\mathcal{Z}$. Therefore, $\hat{z}^{n_\ell}$ is well-defined. For any $\epsilon>0$, there exists $\ell_1\in\mathcal{L}$ such that for all $\ell\in\mathcal{L}$ that $\ell\geq \ell_1$, we get
\begin{equation*}
\begin{split}
\bigg|\sum_{j=1}^N v_j(g_\zeta^{n_\ell}(s_{i}^{n_\ell},\sigma_{-i}^{n_\ell}(\theta_{-i}));\theta_j)
-\sum_{j=1}^N v_j(\hat{z}^{n_\ell};\theta_j)\bigg|\leq \epsilon/2,
\end{split}
\end{equation*}
because of the continuity of $v_i(\cdot;\theta_i)$, $\forall i$, and the fact that $\lim_{\ell\in\mathcal{L},\ell\rightarrow \infty} \mathrm{dist}(g_\zeta^{n_\ell}(s_{i}^{n_\ell},\sigma_{-i}^{n_\ell}(\theta_{-i})),\mathcal{Z})=0$ since $\{\beta(n_\ell)\}_{\ell\in\mathcal{L}}$ is a vanishing sequence (as $|\mathcal{L}|=\infty$). Hence, 
\begin{align} 
\sum_{j=1}^N v_j(\hat{z}^{n_\ell};\theta_j)\leq & \sum_{j=1}^N v_j(g_\zeta^{n_\ell}(s_{i}^{n_\ell},\sigma_{-i}^{n_\ell}(\theta_{-i}));\theta_j)+\epsilon/2 \nonumber\\ = & v_i(g_\zeta^{n_\ell}(s_{i}^{n_\ell},\sigma_{-i}^{n_\ell}(\theta_{-i}));\theta_i) \nonumber\\ & \hspace{.25in}+ g_{\pi_i}^{n_\ell}(s_i^{n_\ell},\sigma_{-i}^{n_\ell} (\theta_{-i}))+\epsilon/2.\label{eqn:proof:4}
\end{align}
Note that, by construction, $\hat{z}^{n_\ell}\in\mathcal{Z}$, $\forall\ell\in\mathcal{L}$. Hence, for any $\epsilon>0$, there exists $\ell_2\in\mathcal{L}$ such that for all $\ell\in\mathcal{L}$ that $\ell\geq \ell_2$, we get
\begin{align} 
v_i(g_\zeta^{n_\ell}&(\sigma_{i}^{n_\ell}(\theta_i),\sigma_{-i}^{n_\ell}(\theta_{-i}));\theta_i)+ g_{\pi_i}^{n_\ell}(\sigma_i^{n_\ell}(\theta_i),\sigma_{-i}^{n_\ell} (\theta_{-i})) \nonumber\\
&=\sum_{j=1}^N v_j(g_\zeta^{n_\ell}(\sigma_{i}^{n_\ell}(\theta_i),\sigma_{-i}^{n_\ell}(\theta_{-i}));\theta_j) \nonumber\\
&\leq \sum_{j=1}^N v_j(\hat{z}^{n_\ell};\theta_j)+\epsilon/2,
\label{eqn:proof:5}
\end{align}
where the last inequality is because $\{g_\zeta^{n_\ell} \circ \sigma^{n_\ell}\}_{\ell\in\mathcal{L}}$ is asymptotically efficient as $|\mathcal{L}|=\infty$ (note that, by definition, if a sequence of decision rules is asymptotically efficient, every infinite subsequence of it is also asymptotically efficient). Combining~\eqref{eqn:proof:4} and~\eqref{eqn:proof:5} while setting $\bar{\ell}=\max(\ell_1,\ell_2)$ proves Claim~2.

Now, we are ready to prove the statement of this proposition. If $|\mathcal{L}|<\infty$ (which implies that $|\mathcal{M}|=\infty$ as, by definition, $\mathcal{M}\cup\mathcal{L}=\mathbb{N}$ and $\mathcal{M}\cap\mathcal{N}=\emptyset$), the proposition follows from Claim~1 and by setting $n_0=\max\{n_{\max_{\ell\in\mathcal{L}}\ell},n_{\bar{m}}\}$ (in Definition~\ref{defaic}). Otherwise, if $|\mathcal{L}|=\infty$, the proof follows from Claims~1 and~2 and by setting $n_0=\max\{n_{\bar{\ell}},n_{\bar{m}}\}$.
\fi

\ifdefined\LONGVERSION
\subsection{Proof of Proposition~\ref{prop:1}} \label{proof:prop:1}
First, note that the payments introduced in Algorithm~\ref{distalg:reducedtax} is of the form introduced in Proposition~\ref{tho:general} when setting $k_i(s_{-i}^n)=-\sum_{j\neq i} \hat{v}_j^{-i,n}$ for all $1\leq i\leq N$. Note that this quantity does not depend on follower $i$'s actions. {\upd Second, for each $n\in\mathbb{N}$, the mechanism $M^n$ is single fault tolerant, since the outcome of the mechanism is guaranteed to be feasible (i.e., $\zeta^n(\theta)\in\mathcal{Z}$) due to the operation in line~17 of Algorithm~\ref{distalg:reducedtax}.}
Third, by the convergence property of the dual decomposition algorithm and the continuity of {\upd cost functions, we have
\begin{align*}
\lim_{n\rightarrow \infty} \sum_{i=1}^N v_j(\zeta_i^n(\theta);\theta_i)=\min_{Rz=c} \sum_{i=1}^N v_i(z_i;\theta_i).
\end{align*}
Hence, the sequence of decision rules $\{\zeta^n\}_{n\in\mathbb{N}}$ is asymptotically efficient.} Now, the rest follows from  Proposition~\ref{tho:general}.
\fi

\ifdefined\LONGVERSION
\subsection{Proof of Proposition \ref{prop:consensus}}
\label{apendconsensus}
Note that $\mathcal{Z}=\{z|B^\top z=0\}$. For any $\hat{z}\in\mathbb{R}^N$, $\mathrm{dist}(\hat{z},\mathcal{Z})=\|B(B^{\top}B)^{-1}B^\top\hat{z}\|_2$ as $(I-B(B^{\top}B)^{-1}B^{\top})\hat{z}$ is the projection of $\hat{z}$ into $\mathcal{Z}$. Following~\cite[p.\,74]{lutkepohl1996handbook}, we get
\begin{align*}
\lambda_{\max}(I-\alpha B^{\top}B)
&\leq \lambda_{\max}(I)+\lambda_{\max}(-\alpha B^{\top}B)\\
&=\lambda_{\max}(I)-\alpha\lambda_{\min}(B^{\top}B)\\
&\leq1-\rho,
\end{align*}
and
\begin{align*}
\lambda_{\min}(I-\alpha B^{\top}B)
&\geq \lambda_{\min}(I)+\lambda_{\min}(-\alpha B^{\top}B)\\
&=\lambda_{\min}(I)-\alpha\lambda_{\max}(B^{\top}B)\\
&\geq 0.
\end{align*}
Now, note that
\begin{equation*}
\begin{split}
B^{\top}z^{k+1}&=B^{\top}(I-\alpha BB^{\top})z^k=(I-\alpha B^{\top}B)B^{\top}z^k,
\end{split}
\end{equation*}
which results in
\begin{equation*}
\begin{split}
\|B^{\top}z^{k+1}\|_2^2&\leq\|I-\alpha B^{\top}B\|_2^2\|B^{\top}z^k\|_2^2\\
&\leq\lambda_{\max}(I-\alpha B^{\top}B)^2\|B^{\top}z^k\|_2^2\\
&\leq(1-\rho)^2\|Rz^k\|_2^2,
\end{split}
\end{equation*}
where the second inequality follows from~\cite[p.\,133]{lutkepohl1996handbook}. This shows that
\begin{equation*}
\begin{split}
\|B^{\top}z^k\|_2^2&\leq (1-\rho)^{2k}\|B^{\top}z^0\|_2^2\leq (1-\rho)^{2k}\sup_{q\in\Theta}\|B^{\top}q\|_2^2.
\end{split}
\end{equation*}
Define $\beta(n)=(1-\rho)^{n} \|B(B^{\top}B)^{-1}\|_2\sup_{q\in\Theta}\|B^{\top}q\|_2$. Evidently, $\lim_{n\rightarrow\infty} \beta(n)=0$ since $0<\rho\leq1$ (as $\lambda_{\min}(B^{\top}B)>0$ because $\mathcal{G}$ is a tree). Furthermore, since $\mathbf{1}^{\top}z^k=\mathbf{1}^{\top}z^0$ following the update dynamics in Algorithm~\ref{alg:3} (see~\cite{xiao2004fast}), we get $\lim_{k\rightarrow\infty} z^k=\left(\frac{1}{N}\sum_{i=1}^N \theta_i\right)\mathbf{1}.$ Therefore, $\{g_\zeta^n\circ \sigma^n\}_{n\in\mathbb{N}}$ is asymptotically efficient. Now, the rest of the proof follows from applying Proposition~\ref{Prop:Infeasible}.
\fi

\bibliographystyle{ieeetr}
\bibliography{Refs}

\end{document}